\algnewcommand{\LineComment}[1]{\Statex \(\triangleright\) #1}
\algnewcommand{\LineCommentSpaced}[2]{\Statex \(#2 \triangleright\) #1}
\newcolumntype{R}[2]{%
    >{\adjustbox{angle=#1,lap=\width-(#2)}\bgroup}%
    l%
    <{\egroup}%
}
\newcommand{\spara}[1]{\smallskip\noindent{\bf #1}}
\newtheorem{mydefinition}{Definition}
\newtheorem{myexample}{Example}
\newtheorem{problem}{Problem}
\DeclareMathOperator*{\argmin}{arg\,min}
\newcommand{\NPhard}{$\mathbf{NP}$-hard}
\newcommand{\squishlist}{
 \begin{list}{$\bullet$}
  {  \setlength{\itemsep}{0pt}
     \setlength{\parsep}{3pt}
     \setlength{\topsep}{3pt}
     \setlength{\partopsep}{0pt}
     \setlength{\leftmargin}{2em}
     \setlength{\labelwidth}{1.5em}
     \setlength{\labelsep}{0.5em}
} }
\newcommand{\squishlisttight}{
 \begin{list}{$\bullet$}
  { \setlength{\itemsep}{0pt}
    \setlength{\parsep}{0pt}
    \setlength{\topsep}{0pt}
    \setlength{\partopsep}{0pt}
    \setlength{\leftmargin}{2em}
    \setlength{\labelwidth}{1.5em}
    \setlength{\labelsep}{0.5em}
} }
\newcommand{\squishdesc}{
 \begin{list}{}
  {  \setlength{\itemsep}{0pt}
     \setlength{\parsep}{3pt}
     \setlength{\topsep}{3pt}
     \setlength{\partopsep}{0pt}
     \setlength{\leftmargin}{1em}
     \setlength{\labelwidth}{1.5em}
     \setlength{\labelsep}{0.5em}
} }
\newcommand{\squishend}{
  \end{list}
}
\newcommand{\cps}{\textbf{\textsf{cps}}}
\newcommand{\ctp}{\textbf{\textsf{ctp}}}
\newcommand{\mwc}{\textbf{\textsf{mwc}}}
\newcommand{\bh}{\textbf{\textsf{ldm}}}
\newcommand{\mdl}{\textbf{\textsf{mdl}}}
\newcommand{\mis}{\textbf{\textsf{mis}}}
\newcommand{\greedy}{\textbf{\textsf{GRA\_mis}}}
\newcommand{\exh}{\textbf{\textsf{exh}}}
\newcommand{\greedycps}{\textbf{\textsf{GRA\_cps}}}
\newcommand{\greedyctp}{\textbf{\textsf{GRA\_ctp}}}
\providecommand{\poly}{{\operatorname{poly}}}
\newcommand{\dataset}[1]{\textsf{#1}}
\newcommand{\mycomment}[1]{}
\newenvironment{prooftext}[1]{\par\noindent{\bf Proof#1.}\quad}{\nopagebreak$\qed$\\}
\newenvironment{myproof}{\begin{prooftext}{}}{\nopagebreak\end{prooftext}}
\newcommand{\ouralg}{\text{Algorithm~\ref{algo:greedy}}}
\newcommand{\eff}[1]{\mathcal{E}({#1})}
\begin{document}

\copyrightyear{2017}
\acmYear{2017}
\setcopyright{acmlicensed}
\acmConference{CIKM'17 }{November 6--10, 2017}{Singapore,
Singapore}
\acmPrice{15.00}
\acmDOI{10.1145/3132847.3132991}
\acmISBN{978-1-4503-4918-5/17/11}

\fancyhead{}
\settopmatter{printacmref=false, printfolios=false}

\title[The Minimum Inefficiency Subgraph Problem]{To Be Connected, or Not to Be  Connected:\\  That is the Minimum Inefficiency Subgraph Problem}

\author{Natali Ruchansky}
\affiliation{\small Univ. of Southern California, USA}
\email{natalir@bu.edu}
\author{Francesco Bonchi}
\affiliation{\small ISI Foundation, Italy}
\email{francesco.bonchi@isi.it}
\author{David Garc\'ia-Soriano}
\affiliation{\small Universidad Pompeu Fabra, Spain}
\email{elhipercubo@gmail.com }
\author{Francesco Gullo}
\affiliation{\small UniCredit, R\&D Dept., Italy}
\email{gullof@acm.org }
\author{Nicolas Kourtellis}
\affiliation{\small Telefonica Research, Spain}
\email{nicolas.kourtellis@telefonica.com }

\renewcommand{\shortauthors}{N. Ruchansky et al.}

\begin{abstract}
We study the problem of extracting a \emph{selective connector} for a given set of query vertices $Q \subseteq V$ in a graph $G = (V,E)$. A selective connector is a subgraph of $G$ which exhibits some cohesiveness property,  and contains the query vertices but does not necessarily connect them all. Relaxing the connectedness requirement allows the connector to detect multiple communities and to be tolerant to outliers. We achieve this by introducing the new measure of \emph{network inefficiency} and by instantiating our search for a selective connector as the problem of finding the \emph{minimum inefficiency subgraph}.

We show that the \emph{minimum inefficiency subgraph} problem is \NPhard, and devise efficient algorithms to approximate it. By means of several case studies in a variety of application domains (such as human brain, cancer, and food networks), we show that our minimum inefficiency subgraph produces high-quality solutions, exhibiting all the desired behaviors of a selective connector.
\end{abstract}


\maketitle
\sloppy

\section{Introduction}
\label{sec:intro}

\enlargethispage*{\baselineskip}
Finding subgraphs connecting a given set of  vertices of interest is a fundamental graph mining primitive which has received a great deal of attention. The extracted substructures provide insights on the relationships that exist among the given set of vertices.
For instance, given a set of proteins in a protein-protein interaction (PPI) network, one might find other proteins that participate in pathways with them.
Given a set of users in a social network that clicked an ad, to which other users (by the principle of ``homophily'') should the same ad be shown. 
Several problems of this type have been studied under different names, e.g., \emph{community search}~\cite{SozioKDD10,SozioLocalSIGMOD14,BarbieriBGG15},
\emph{seed set expansion} \cite{Andersen1,Kloumann}, \emph{connectivity subgraphs} \cite{connect,CenterpieceKDD06,ruchansky2015minimum,akoglu2013mining}, just to mention a few.  While optimizing for different objective functions, the bulk of this literature (briefly surveyed in Section \ref{sec:related}) shares a common aspect: the solution must be a \emph{connected} subgraph of the input graph containing the set of query vertices.

The \emph{requirement of connectedness} is a strongly restrictive one. Consider, for example, a biologist inspecting a set of proteins that she \emph{suspects} could be cooperating in some biomedical setting.  It may very well be the case that one of the proteins is not related to the others: in this case, forcing the sought subgraph to connect them all might produce poor quality solutions, while at the same time hiding an otherwise good solution. By relaxing the connectedness condition, the outlier protein can be kept disconnected, thus returning a much better solution to the biologist.

Another consequence of the connectedness requirement is that by trying to connect possibly unrelated vertices, the resulting solutions end up being very large. 
As highlighted in \cite{ruchansky2015minimum}, the bulk of the literature makes the (more or less implicit) assumption that the query vertices belong to the same community. When such an assumption on the query set is satisfied, these methods return reasonably compact subgraphs. However, when the query vertices belong to different modules of the input graph, these methods tend to return too large a subgraph, often so large as to be meaningless and unusable in applications.

In this paper, we study the \emph{selective connector problem}: given a graph $G = (V,E)$ and a set of query vertices $Q \subseteq V$,  find a superset $S
\supseteq Q$ of vertices such that its induced subgraph, denoted $G[S]$, has some good ``cohesiveness'' properties, but is not necessarily connected. Abstractly, we would like our selective connector $G[S]$ to have the following desirable properties:
\squishlist
\item \textbf{Parsimonious vertex addition.} Vertices should be added to $Q$ to form the solution $S$, if and only if
they help form more ``cohesive'' subgraphs by better connecting the vertices in $Q$. Roughly speaking, this ensures that
the only vertices added are those which serve to better explain the connection between the elements of $Q$ (or a subset thereof).

\item \textbf{Outlier tolerance.} If $Q$ contains vertices which are ``far'' from the rest of $Q$, those should remain
disconnected in the solution $S$ and be considered as outliers. The necessity for this stems from the fact that real-world
query-sets are likely to contain some vertices that are erroneously suspected of being related.

\item \textbf{Multi-community awareness.} If the query vertices $Q$ belong to two or more communities, then
the connector should be able to recognize this situation, detect the communities, and refrain from imposing connectedness between them.

\squishend
\enlargethispage*{\baselineskip}
So far, cohesiveness has been discussed abstractly, without a formal definition of a specific measure.
A natural way to define the cohesiveness of a subgraph $G[S]$ is to consider the shortest-path distance $d_{G[S]}(u,v)$ between every pair of vertices $u,v \in S$. Shortest paths define fundamental structural properties of networks, playing a key role in basic
mechanisms such as their evolution~\cite{kossinets2006empirical}, the formation of communities~\cite{girvan02community}, and the propagation of information; for example,
\emph{betweenness centrality}~\cite{bavelas48} which is defined as the fraction of shortest paths that a vertex participates in, is a measure of the extent to which an actor has control over information flow in the network.

One issue with shortest-path distance is that, when the connectedness requirement is dropped, pairs of vertices can be disconnected, thus yielding an infinite distance.
A simple yet elegant workaround to this issue is to use the reciprocal of the shortest-path distance \cite{Marchiori2000}; this has the useful property of
handling $\infty$ neatly (assuming by convention that $\infty^{-1} = 0$). This is the idea at the heart of \emph{network efficiency}, a graph-theoretic notion that was introduced by Latora and Marchiori \cite{latora2001efficient} as a measure of how efficiently a network $G = (V,E)$ can exchange information:

\smallskip
$$
\eff{G}=\frac{1}{|V|(|V|-1)}\sum_{\substack{u,v \in V \\ u\neq v}}\frac{1}{d_G(u,v)}.
$$
\smallskip

Unfortunately, defining the selective connector problem as finding the subgraph $G[S]$ with $S \supseteq Q$  that \emph{maximizes network efficiency} would be meaningless. In fact,
as we show in Section \ref{sec:problem}, the normalization factor $|V|(|V| - 1)$ allows vertices totally unrelated to $Q$ to be added to improve the efficiency; clearly violating our driving principle of parsimonious vertex addition.
Based on the above arguments, we introduce the measure of the \emph{inefficiency} of a graph $G = (V,E)$, defined as follows:

\smallskip
$$
\mathcal{I}(G) = \sum_{\substack{u,v \in V \\ u\neq v}}{1-\frac{1}{d_{G}(u,v)}}.
$$
\smallskip

Hence, we define the selective connector problem as the \emph{parameter-free} problem which requires extracting the subgraph $G[S]$, with $S \supseteq Q$, that \emph{minimizes network inefficiency}.
With this definition, each pair of vertices in the subgraph $G[S]$ produces a cost between 0 and 1, which is minimum when the two vertices are neighbors, grows with their
distance, and is maximum when the two vertices are not reachable from one another. Parsimony in adding vertices is handled by the sum of costs over all pairs of vertices in
the connector; adding one vertex $v$ to a partial solution $S$ incurs $|S|$ more terms in the summation.  The inclusion of $v$ is worth the additional cost only if these costs are small and if $v$ helps reduce the distances between vertices in $S$. Moreover, note that by allowing disconnections in the solution, the second and third design principles above (i.e., outliers and multiple communities) naturally follow from the parsimonious vertex addition.

The \emph{Minimum Inefficiency Subgraph} problem is \NPhard, and we prove that it remains hard even if we constrain the input graph $G$ to have a diameter of at most 3.
Therefore, we devise an algorithm that is based on first building a complete connector for the query vertices and then \emph{relaxing} the connectedness requirement by \emph{greedily} removing non-query vertices. Our experiments show that in 99\% of problem settings, our greedy relaxing algorithm produces solutions no worse than those produced by an exhaustive search, while at the same time being orders of magnitude more efficient.

\noindent
The main contributions of this paper are as follows:
\squishlist
\item We define the novel measure of \emph{Network Inefficiency} and define the problem of finding the \emph{Minimum Inefficiency Subgraph} (\mis), which we prove to be \NPhard. We characterize our measure w.r.t. other existing measures.

 \item We devise a \emph{greedy relaxing algorithm} to approximate the \mis. Our experiments show that in almost all sets of experiments, our greedy relaxing algorithm produces solutions no worse than those produced by an exhaustive search, while at the same time being orders of magnitude faster.

\item We empirically confirm that the \mis\ is a selective connector: i.e., tolerant to outliers and able to detect multiple communities. Besides, the selective connectors produced by our method are smaller, denser, and include vertices that have higher centrality  than the ones produced by the state-of-the-art methods.

\item We show interesting case studies in a variety of application domains (such as human brain, cancer, food networks, and social networks), confirming the quality of our proposal.
\squishend
The rest of the paper is organized as follows. In the next section, we briefly review related prior work and we provide an empirical comparison aimed at highlighting the different characteristics of the connectors produced by our method and state-of-the-art methods.
In Section \ref{sec:problem}, we formally define our problem and prove its hardness in Section~\ref{sec:algorithms}, along with our algorithmic proposals.
Section~\ref{sec:experiments} presents our experimental evaluation and Section~\ref{sec:casestudies} some selected case studies.
Section~\ref{sec:conclusions} concludes this work.

\section{Related work}
\label{sec:related}

\begin{figure*}[t!]
\centering
\begin{tabular}{cccccc}
\hspace{-4mm}  \includegraphics[width = .15\textwidth]{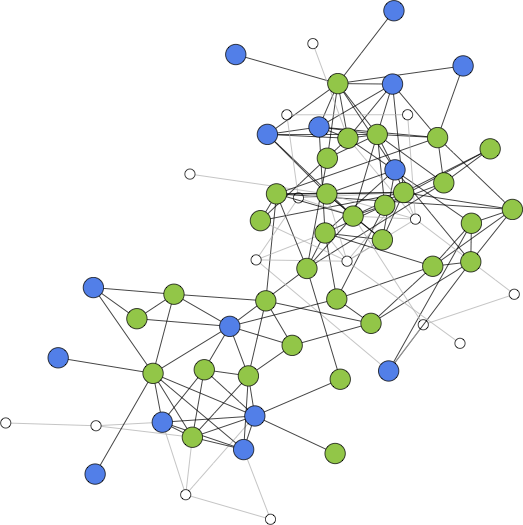} &
\hspace{-2mm} \includegraphics[width = .15\textwidth]{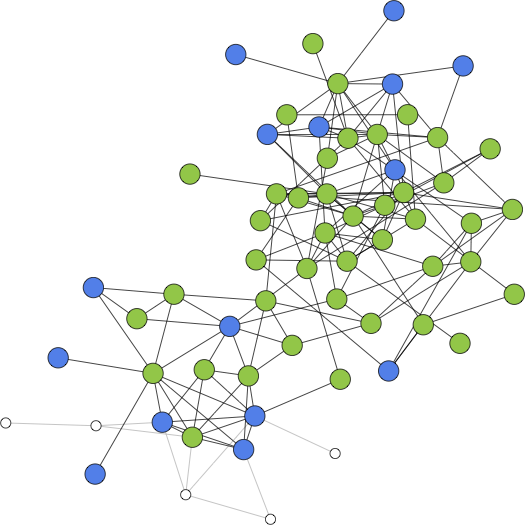} &
\hspace{-2mm} \includegraphics[width = .15\textwidth]{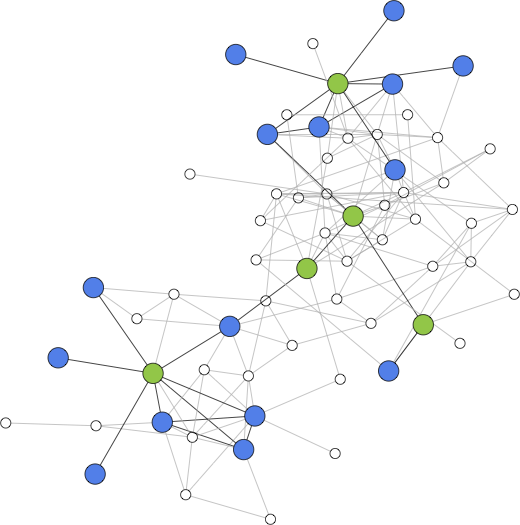} &
\hspace{-2mm} \includegraphics[width = .15\textwidth]{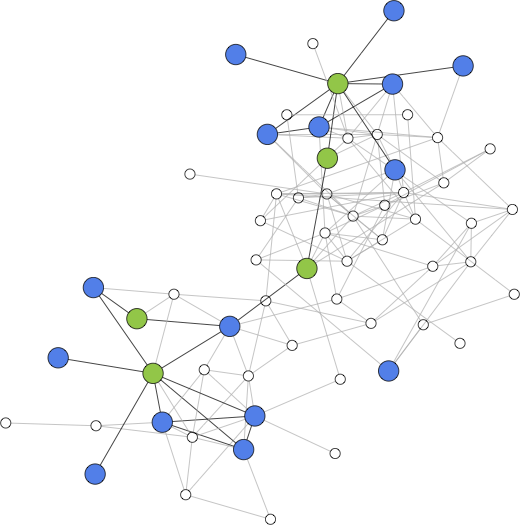} &
\hspace{-2mm} \includegraphics[width = .15\textwidth]{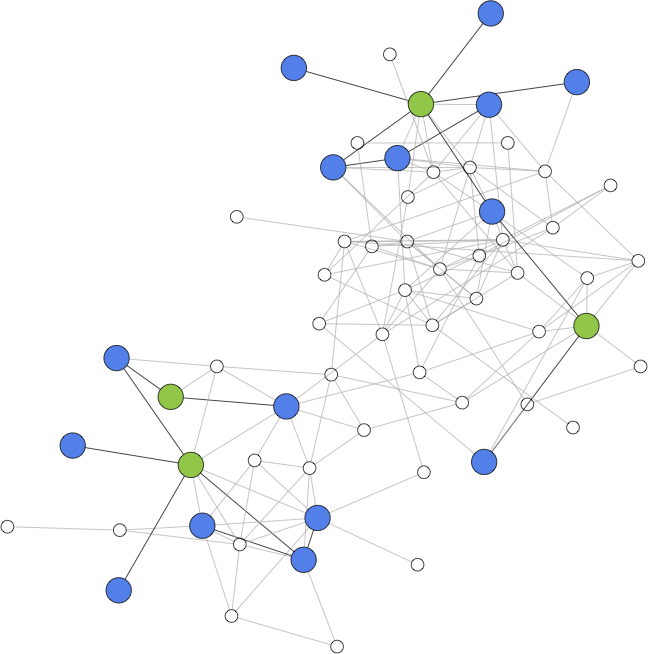} &
\hspace{-2mm} \includegraphics[width = .15\textwidth]{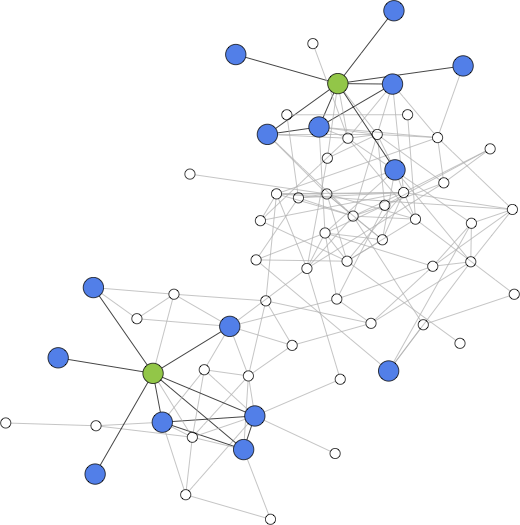} \\
\multicolumn{6}{c}{\vspace{-4mm}}\\
\hspace{-4mm} \cps & \hspace{-2mm} \ctp & \hspace{-2mm} \mwc & \hspace{-2mm} \bh & \hspace{-2mm} \mdl & \hspace{-2mm} \mis
\end{tabular}
\caption{Comparison on the \textsf{Dolphins} social network: query vertices are in blue, added vertices are in green. \label{fig:dolphins}}
\vspace{3mm}
\end{figure*}

In this section, we briefly cover the state of art with respect to algorithmic methods for constructing connectors that connect a set of query vertices without leaving any behind, as well as methods that are more selective in this process and relax the required connectedness.
We also provide a first empirical comparison between these methods and our proposal on two small networks.

\subsection{Connectors and selectors}

Many authors have adopted random-walk-based approaches to the problem of finding vertices related to a given seed of vertices; this is the basic idea of Personalized PageRank \cite{PPR1,PPR2}.
Spielman and Teng propose methods that start with
a seed  and sort all other vertices by their degree-normalized
PageRank with respect to the seed \cite{Spielman}. Andersen and Lang \cite{Andersen1} and Andersen et al. \cite{Andersen2} build on these methods to formulate an algorithm for detecting overlapping
communities in networks. Kloumann and Kleinberg \cite{Kloumann} provide a systematic evaluation of different methods for \emph{seed set expansion} on graphs with known community structure, assuming that the seed set $Q$ is made of vertices belonging to the same community.

Faloutsos et al.~\cite{connect} address the problem of finding a subgraph that
connects two query vertices ($|Q| = 2$) and contains at most $b$ other vertices, optimizing a measure of proximity based on \emph{electrical-current flows}.
Tong and Faloutsos \cite{CenterpieceKDD06} extend \cite{connect} by introducing the concept of \emph{Center-piece Subgraph} dealing with query sets of any size, but again having a budget $b$ of additional vertices.
Koren et al.~\cite{KorenTKDD07} redefine proximity using the notion of \emph{cycle-free effective conductance}  and propose a branch and bound algorithm.
All the approaches described above require several parameters: common to all is the size of the required solution, plus all the usual parameters of PageRank methods, e.g., the jumpback probability, or the number of iterations.

Sozio and Gionis~\cite{SozioKDD10} define the (parameter-free) optimization problem of finding a connected subgraph containing $Q$ and maximizing the minimum degree.  They propose an efficient
algorithm; however, their algorithm tends to return extremely large solutions 
(it should be noted that for the same query $Q$ many different optimal solutions of different sizes exist).
To circumnavigate this drawback they also study a constrained version of
their problem, with an upper bound on the size of the output community. In this case, the problem becomes \NPhard, and they propose a heuristic where the quality of the solution produced can be arbitrarily far away from the optimal value of a solution to the unconstrained problem.

Ruchansky et al. \cite{ruchansky2015minimum} introduce the parameter-free problem of extracting the \emph{Minimum Wiener Connector}, that is the connected subgraph containing $Q$ which minimizes the pairwise sum of shortest-path distances among its vertices. The Minimum Wiener Connector adheres to the parsimonious vertex addition principle, it is typically small, dense, and contains vertices with high betweenness centrality. However, being a connected subgraph is neither tolerant to outliers nor able to expose multiple communities.

Two recent approaches allow disconnected solutions, although very different in spirit from our approach.
Akoglu et al. \cite{akoglu2013mining} study the problem of finding \emph{pathways}, i.e., connection subgraphs for a large query set $Q$, in terms of the Minimum Description Length (MDL) principle.  According to MDL, a pathway is simple when only a few bits are needed to relay which edges should be followed to visit all vertices in $Q$. Their proposal can detect multiple communities and outliers, but it does not follow the parsimony principle as it might add vertices that do not bring any advantage in terms of cohesiveness. 
A major difference, however, is that their pathways are sets of edges identifying trees.
This way their solution drops the induced-subgraph assumption, and, as such, lacks generality.

Given a graph $G$ and a query set $Q$, Gionis et al. \cite{gionisbump} study the problem of finding a connected subgraph of $G$ that has more vertices that belong to $Q$ than vertices that do not.  For a candidate solution $S$ that has $p$ vertices from $Q$ and $r$ not in $Q$, they define the \emph{discrepancy} of $S$ as a linear combination of $p$ and $r$, and study the problem of maximizing discrepancy . They show that the problem is \NPhard\ and develop efficient heuristic algorithms. The maximum discrepancy subgraph is tolerant to outliers (as it is allowed to disregard part of $Q$), but it cannot detect multiple communities (as the solution is one, and only one, connected component).


\subsection{Empirical comparison with prior art}\label{subsec:first_comparison}

\begin{figure*}[t!]
\vspace{-2mm}
\centering
\begin{tabular}{cccccc}
\hspace{-2mm} \includegraphics[width = .14\textwidth]{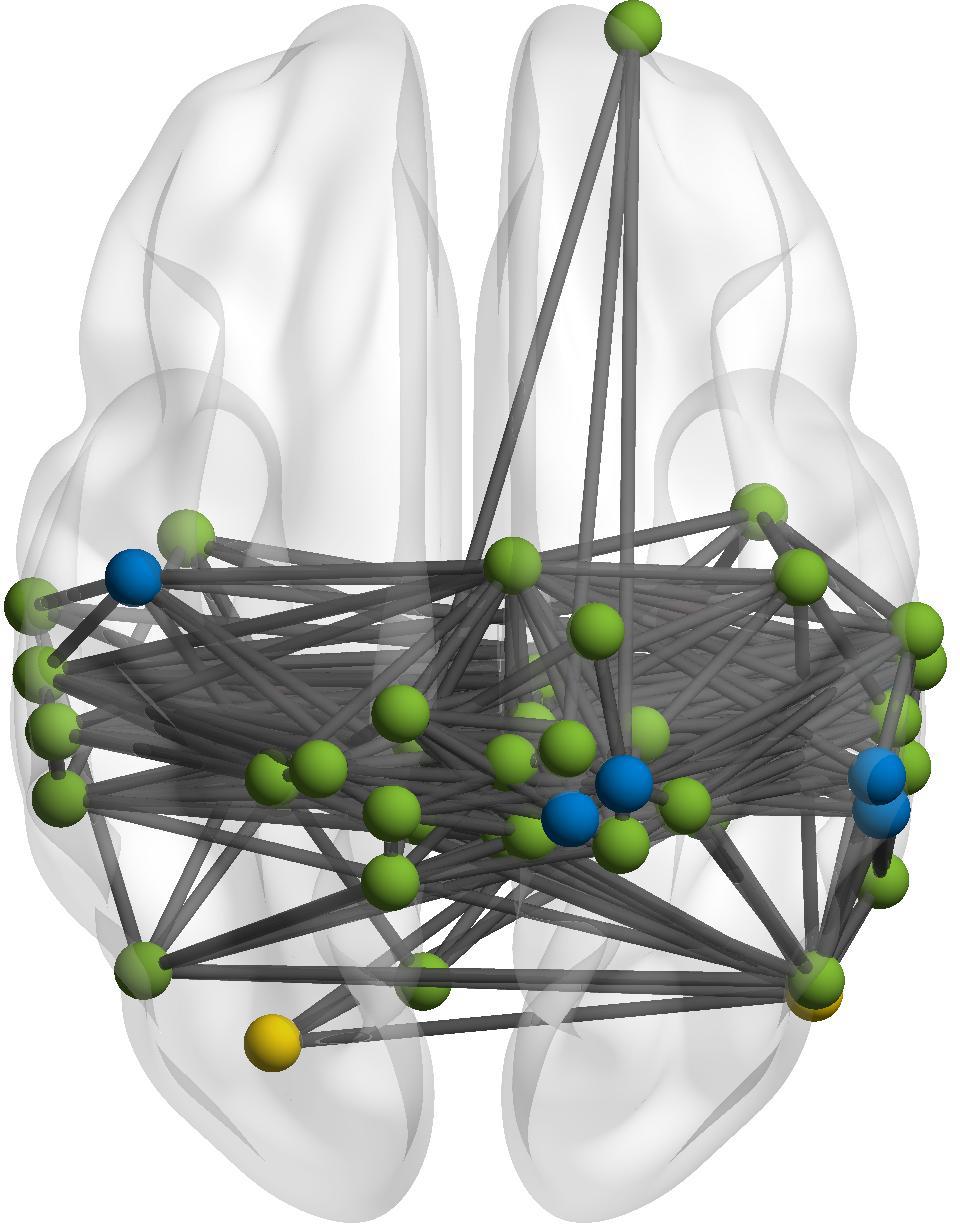}	&
\hspace{-2mm} \includegraphics[width = .14\textwidth]{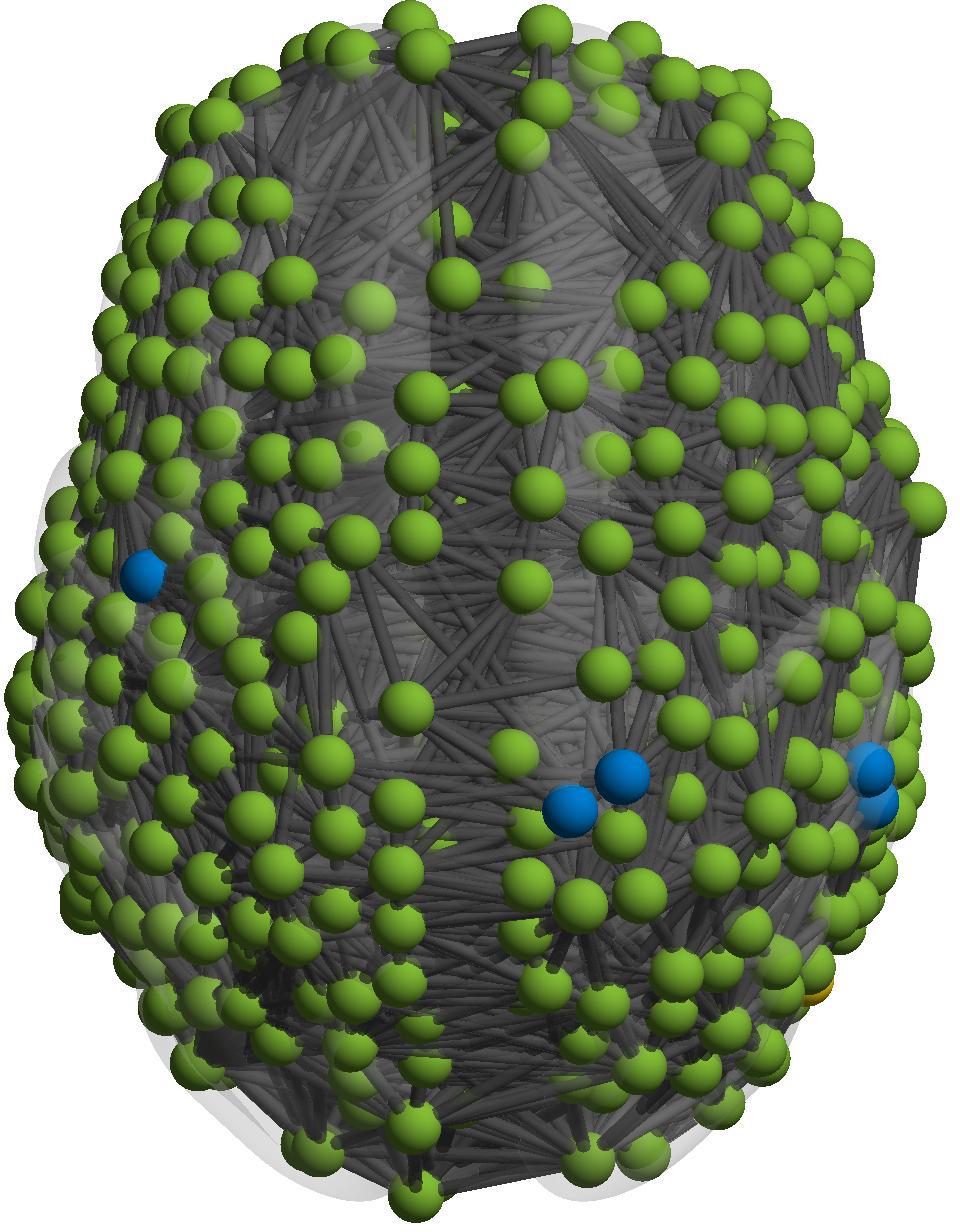}		&
\hspace{-2mm} \includegraphics[width = .14\textwidth]{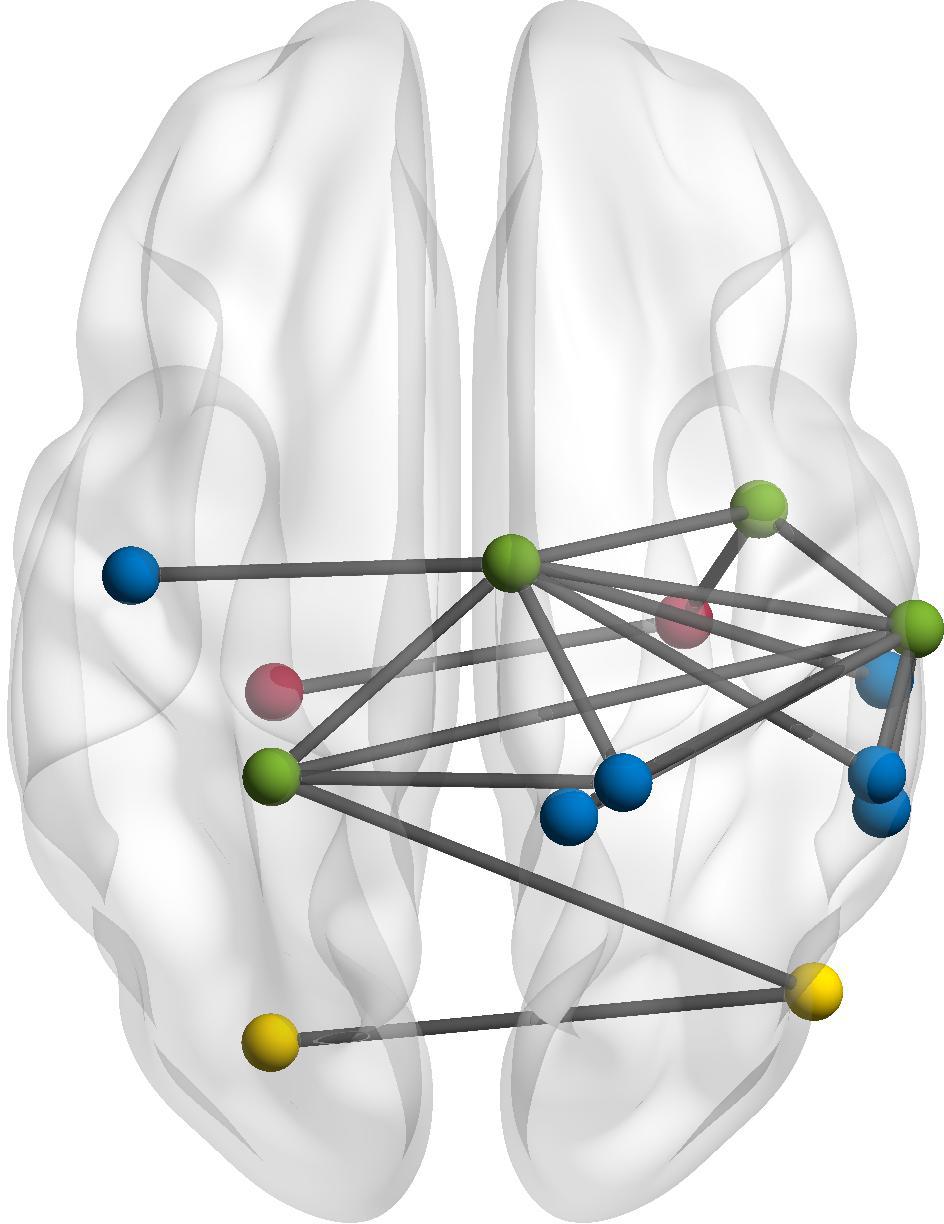}		&
\hspace{-2mm} \includegraphics[width = .14\textwidth]{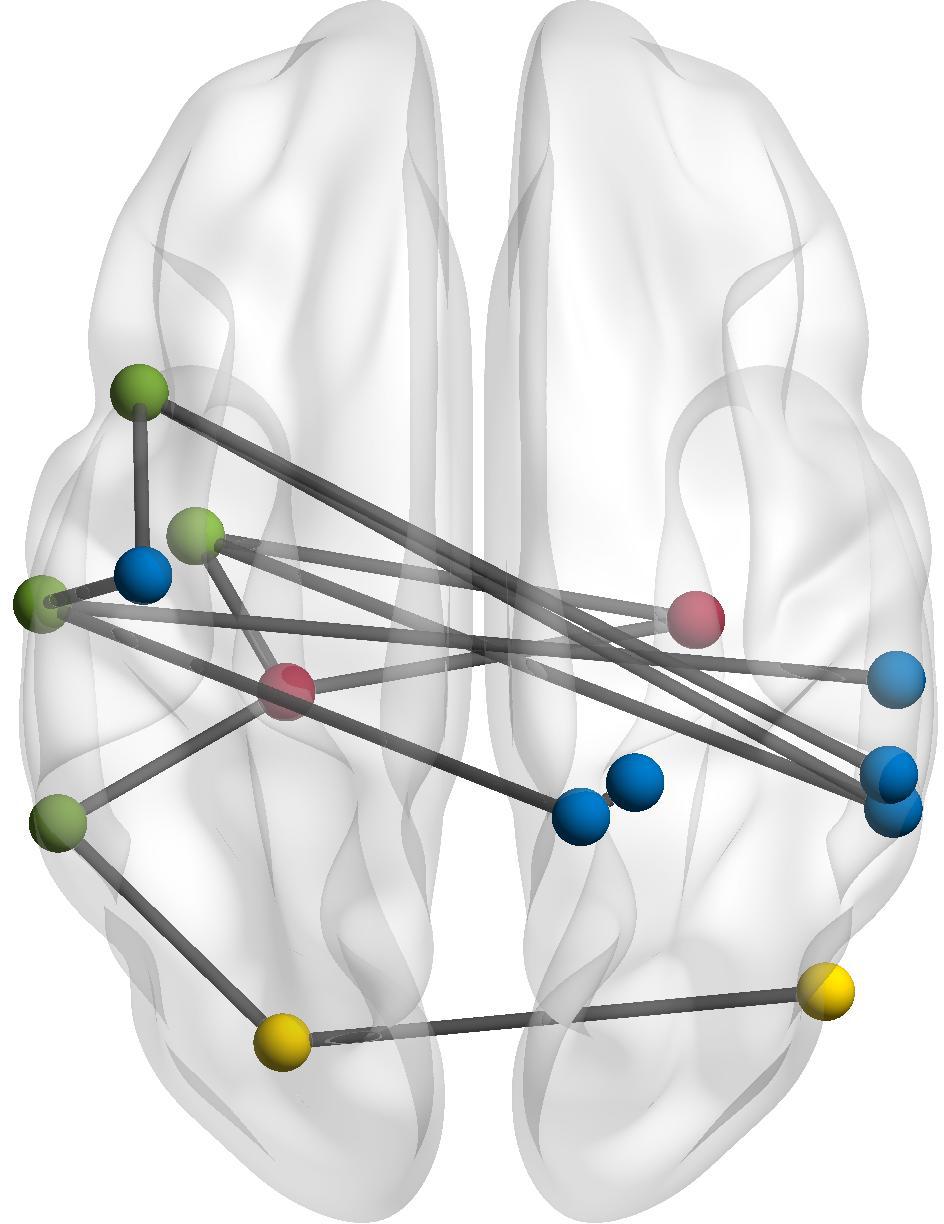}			&
\hspace{-2mm} \includegraphics[width = .14\textwidth]{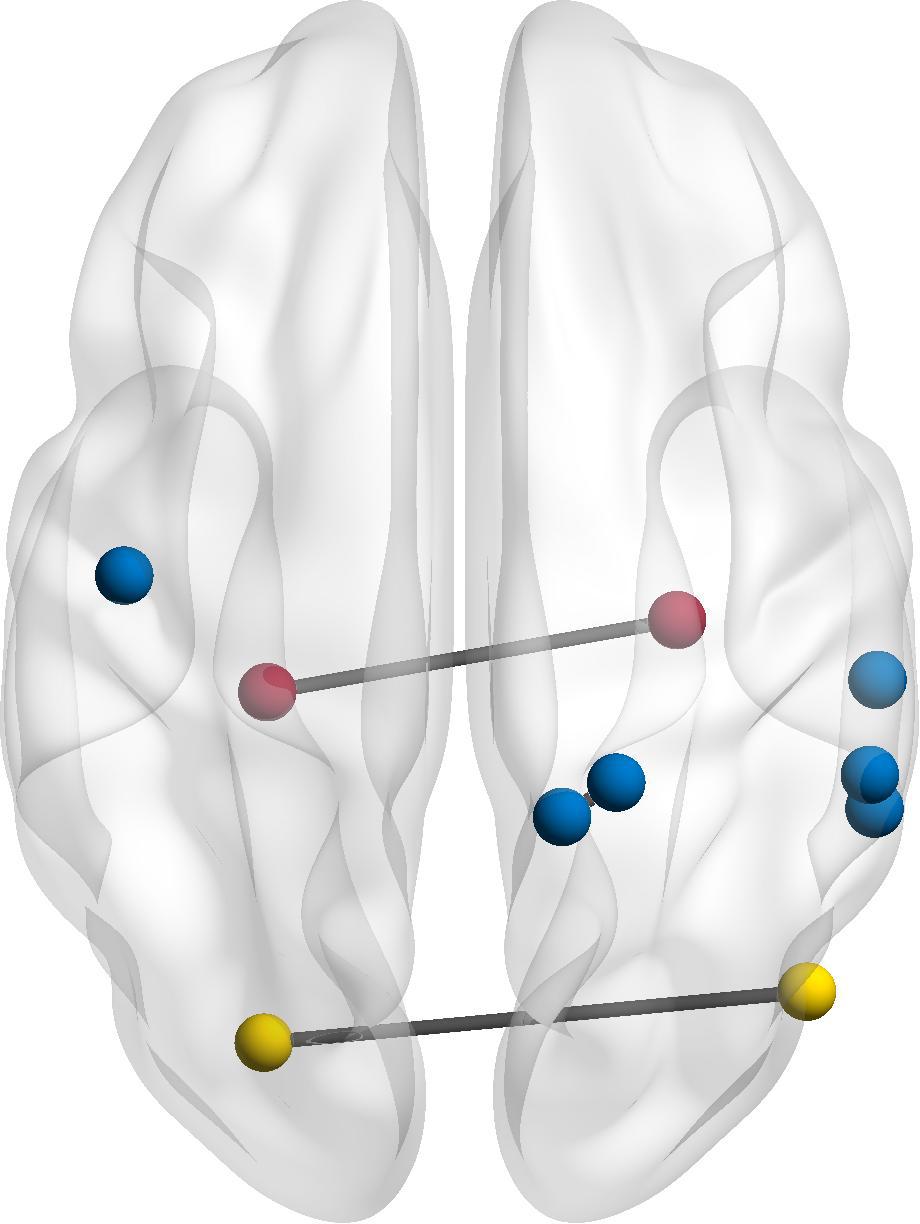}			&
\hspace{-2mm} \includegraphics[width = .14\textwidth]{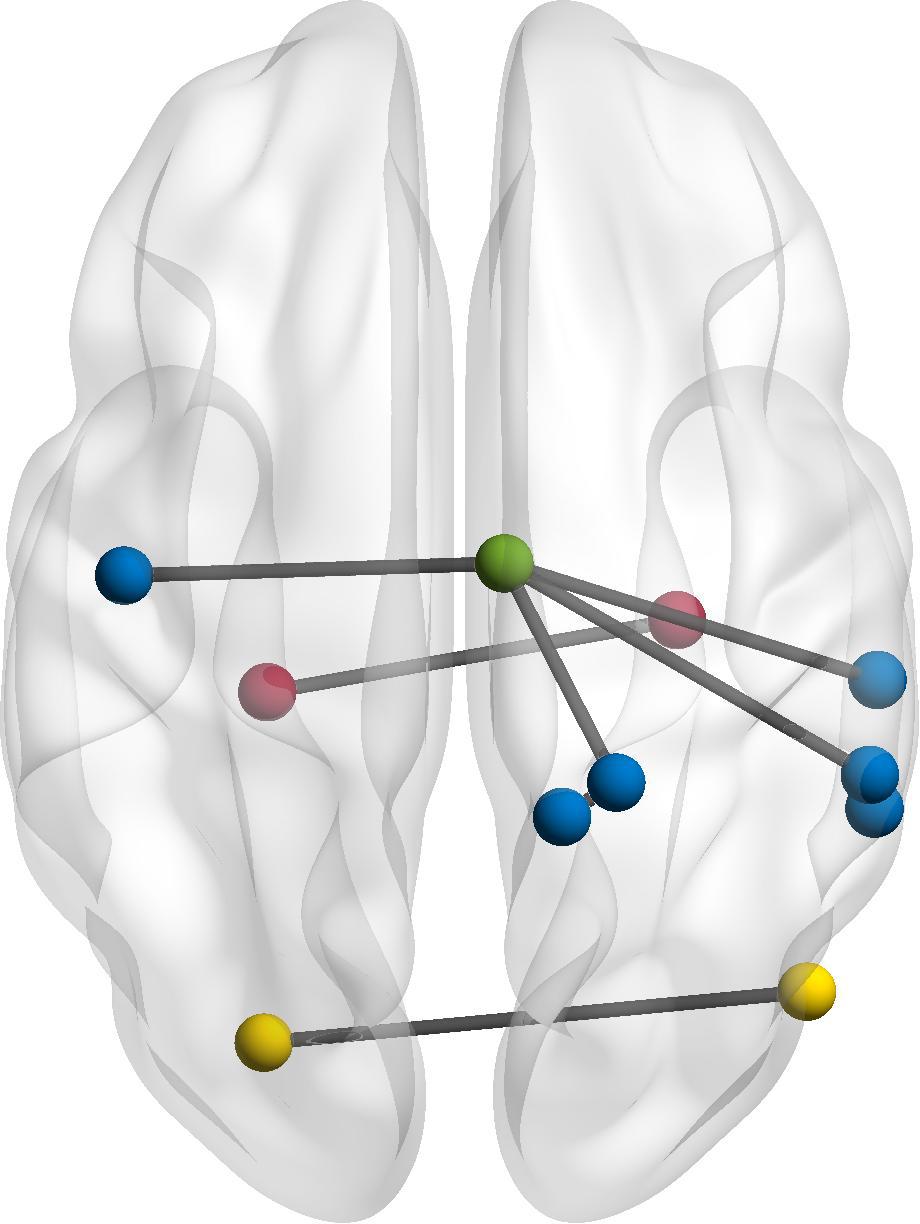}		\\
\hspace{-2mm} \cps & \hspace{-2mm} \ctp & \hspace{-2mm} \mwc & \hspace{-2mm} \bh & \hspace{-2mm} \mdl & \hspace{-2mm} \mis	\\
\end{tabular}
\vspace{-2mm}
\caption{Comparison on a cortical connectivity network. Query vertices are colored w.r.t. their known functionalities: memory and motor function (blue vertices), emotions (yellow vertices), visual processing (red vertices). The green vertices are the ones added to produce the solution. The images were produced using BrainNet Viewer software~\cite{xia2013brainnet}.   \label{fig:brain} }
\vspace{4mm}
\end{figure*}

In the rest of this section, we present two concrete examples on small graphs to highlight the differences in the types of connectors produced by our proposal and these methods in the literature. 
Deeper empirical analysis, more comparisons and case studies will be discussed later in Section \ref{sec:experiments}.

\noindent
The algorithms we consider in the comparison are as follows:\footnote{We do not compare directly with \emph{prize-collecting Steiner tree} because (1) the need of setting vertex weights makes direct and fair comparison difficult, and (2) the algorithm for prize-collecting Steiner tree is at the basis of the heuristic of  \bh, which is covered in our comparison.}

 \squishlist

 \item The random-walk based algorithm for \emph{centerpiece subgraph} (denoted \textbf{\cps}) with restart probability set to 0.9 as recommended in the original paper \cite{CenterpieceKDD06}.

 \item The iterative peeling algorithm for the so-called \emph{``cocktail party'' } (denoted \textbf{\ctp}), in its parameter-free version~\cite{SozioKDD10}.

\item The parameter-free algorithm for the Minimum Wiener Connector (Algorithm 1 in \cite{ruchansky2015minimum}, denoted \textbf{\mwc}).

 \item For the \emph{local discrepancy maximization} (denoted \textbf{\bh}) \cite{gionisbump} there are two algorithmic steps: ($i$) expand the query set and ($ii$) search within the expanded graph. For ($i$) we use the \texttt{AdaptiveExpansion} algorithm, and for ($ii$) we use \texttt{Smart-ST} because they have the best efficiency-accuracy tradeoff \cite{gionisbump}. All the parameters are set with their default value.

\item For the MDL-based approach of \cite{akoglu2013mining} (denoted \textbf{\mdl}), we use the \texttt{Minimum-Arborescence} algorithm because it has the best efficiency-accuracy tradeoff.

\item Finally, the greedy algorithm for \emph{minimum inefficiency subgraph} (denoted \textbf{\mis}) that we introduce later in Section \ref{sec:algorithms}.
\squishend

\spara{Dolphins social network.} Figure \ref{fig:dolphins} reports an example on the famous Dolphins toy-graph\footnote{\url{https://networkdata.ics.uci.edu/data.php?id=6}}: the query vertices are in blue, the vertices added to produce the solution are in green. The query vertices are selected in such a way that there are two clear communities among the vertices in $Q$ and one outlier vertex.

As it is often the case, \cps\ and \ctp\ return a very large solution, while 
\mwc\ produces a much slimmer connector.  As connectedness is still a requirement for \mwc, it is, of course, not able to detect the two communities nor the outlier.
The next three methods (right-half of Figure \ref{fig:dolphins}) allow disconnected solutions.
As discussed above, \bh\ only returns one connected component: thus it can deal with outliers (as it does in the example in Figure \ref{fig:dolphins}) but it cannot return multiple communities.
Regardless of the fact that it aims at producing slim connectors (pathways), \mdl\ adds more green vertices than are strictly needed to connect $Q$. Although in principle it is tolerant to outliers, in this example it does not detect the outlier and pays the price of a bridging green vertex to connect it. Instead, our \mis\ only adds one green vertex for each of the two communities and does not connect the outlier.

\spara{Human connectome.}
Recently there has been a surge of interest in modeling the brain as a graph, with complex topological and functional properties as graph problems \cite{sporns2005human}.
 For our case study, we use a publicly available co-activation dataset\footnote{\url{https://sites.google.com/site/bctnet/datasets}} that was originally described by Crossley et al in ~\cite{crossley2013cognitive}.  The graph contains 638 vertices each of which corresponds to a (similarly sized) cortical area of the human brain. The 18625 links represent functional associations among the cortical areas. In addition, each vertex is associated with location coordinates. In this context, an interesting question is: \emph{given a set of cortical areas in the brain, what are the functional relationships among them?}

As query vertices, we select cortical areas with different known functions. In particular, in Figure~\ref{fig:brain}, $Q$ is the union of the blue, red, and yellow vertices, while the green vertices are the added ones. We used the Talairach Client\footnote{\url{http://www.talairach.org/}} to map each vertex to a Brodmann area using its coordinates.  Brodmann areas are 52 areas of the brain that have been associated with various brain functions through fMRI analysis.\footnote{\url{http://www.fmriconsulting.com/brodmann/index.html}} 
By analyzing the functional associations of the Brodmann area of each vertex, we find that the blue vertices are all involved in memory and motor function (some more in memory and some more in motion), the yellow vertices correspond to Brodmann areas related to emotion, and the red vertices to visual processing.

The \mis\ uncovers these functional similarities in a way that is easy to visualize and interpret: it only adds one vertex that acts as a hub for the blue vertices, without connecting them to the yellow or red vertices. The added vertex corresponds to Brodmann area 6 which contains the premotor cortex and is associated with both complex motor and memory functions.
In contrast, \cps, \ctp, and \mwc\ add many more vertices in order to connect the whole query set.
Although \bh\ could leave out query vertices from the solution, in this specific case it returns a completely connected structure. \mdl\ correctly detects the yellow and red
substructures, but it misses the blue structure, considering the blue vertices as outliers that are too far away to be worth connecting.  This behavior can be explained by the
fact that {\mdl} explicitly penalizes vertices of high degree and that the brain network we analyse is very dense. In fact, for the majority of query sets we experimented with, {\mdl} returned only the edges induced by the query vertices themselves.

\section{Minimum inefficiency subgraph}
\label{sec:problem}
We start by introducing and characterizing  \emph{Network Inefficiency} for a general directed, possibly weighted, graph $G = (V,E)$. We denote $d_{G}(v,u)$ the shortest-path distance between $v$ and $u$ in $G$.
\begin{mydefinition}[Network Inefficiency]
Given a graph $G = (V,E)$ we define its inefficiency as
$$\mathcal{I}(G) = \sum_{\substack{u,v \in V \\ u\neq v}}{1-\frac{1}{d_{G}(v,u)}}.$$
\end{mydefinition}
By taking  the reciprocal of the shortest-path distance, we can smoothly handle disconnected vertices, i.e., $d_{G}(v,u) = \infty$. The same intuition is at the basis of
\emph{network efficiency} \cite{latora2001efficient} (which we already discussed in Section 1), as well as \emph{harmonic centrality}.

    \begin{mydefinition}[Harmonic Centrality] The harmonic centrality of a vertex $u$ in a graph $G = (V,E)$ is defined as
    $$c(u)=\sum_{v\in V}\frac{1}{d_G(v,u)}.$$
    \end{mydefinition}

Next, we show the connection between harmonic centrality, network efficiency, and  network inefficiency. Let $C(G)$ denote the sum of harmonic centrality for all the vertices, i.e.,  $C(G) = \sum_{u \in V} c(u)$ and let $|V| = n$. We note that $C(G)$ ranges in $[0,n(n-1)]$, being 0 \emph{iff} $|E| = 0$ and $n(n-1)$ \emph{iff} $|E| = n(n-1)$ ($G$ is a clique). Then:
    \vspace{1mm}
    \begin{eqnarray*}
      \eff{G} &=& C(G) / (n(n-1)) \\
      \mathcal{I}(G) &=& n(n-1) - C(G)
    \end{eqnarray*}
    \vspace{1mm}
Therefore, efficiency normalizes $C(G)$ by $n(n-1)$ (the maximum possible number of edges in a network of $n$ vertices), thus ranging in $[0,1]$, while inefficiency takes the difference between $n(n-1)$ and $C(G)$, thus ranging in $[0,n(n-1)]$.
Of course, the higher the total harmonic centrality $C(G)$, the more cohesive the graph, giving a higher efficiency and a lower inefficiency.

So far, it is not apparent why we need to introduce network inefficiency, instead of simply relying on the well-established concept of network efficiency.
In Section~\ref{sec:intro}, we already provided a hint: the next examples explain why network efficiency, by not adhering to the parsimonious vertex addition, is not suited for our purpose of extracting selective connectors.

\begin{myexample}
Consider a query set $Q = \{v_1,v_2,v_3\}$ such that the three query vertices are disconnected.
In this case, $C(G[Q]) = 0$, and thus $\eff{G[Q]} = 0$ and  $\mathcal{I}(G[Q]) = 6$.
Consider now $S = Q \cup \{v_4\}$, where a new vertex $v_4$, not connected to $Q$, is added. Again, $C(G[S]) =  \eff{G[S]} = 0$, but $\mathcal{I}(G[S]) = 12$: according to network efficiency $Q$ and $S$ are equivalent.
Instead, by adding a totally unrelated vertex to $Q$, network inefficiency gets worse (larger), as desirable.

Consider instead of adding $v_4$, adding to $Q$  a big clique $T$ with $|T| = 100$, which is disconnected from $Q$ and let $S = Q \cup T$.
In this case, $C(G[S]) = 9900$ and $\eff{G[S]} = 0.942$.
By adding a totally disconnected clique, the network efficiency has gone from minimal to almost maximal.\footnote{This problem is also known in the literature as \emph{free rider effect} \cite{wu2015robust}.}
Instead, network inefficiency gets much worse when adding $T$: in fact, it goes from $\mathcal{I}(G[Q]) = 6$ to $\mathcal{I}(G[S]) = 606$.
\end{myexample}

Therefore, towards our aim of extracting selective connectors, in this paper we study the problem of \emph{Minimum Inefficiency Subgraph} which we formally introduce next.

\spara{Problem statement.}
When introducing network inefficiency above, for the sake of generality, we considered a directed graph.
From now on, when studying the \emph{Minimum Inefficiency Subgraph} problem, we consider a simple, undirected, unweighted graph $G = (V,E)$. Given a set of vertices $S \subseteq V$, let $G[S]$ be the subgraph of $G$ induced by $S$: $G[S] = (S, E[S])$, where $E[S] = \{(u,v) \in E \mid u \in S, v \in S\}$.

\begin{problem}[\textsc{Min-Inefficiency-Subgraph}]\label{prob:mis}
Given an undirected graph $G = (V,E)$ and a query set $Q \subseteq V$, find
$$H^* = \argmin_{G[S] : Q \subseteq S \subseteq V} I(G[S]).$$
\end{problem}

\section{Algorithms}
\label{sec:algorithms}
In this section, we first establish the complexity of \textsc{Min-Inefficiency-Subgraph} and then present our algorithms.

\subsection{Hardness}

\begin{theorem}
\textsc{Min-Inefficiency-Subgraph} is \NPhard, and it remains hard even on undirected graphs with diameter 3.
\end{theorem}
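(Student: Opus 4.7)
The plan is a single polynomial-time reduction from the \NPcomplete\ problem \textsc{Exact Cover by 3-Sets} (X3C) that produces a graph of diameter exactly $3$, thereby establishing both halves of the theorem at once. Recall X3C: given a universe $U$ with $|U| = 3k$ and a family $\mathcal{S}$ of 3-element subsets of $U$, decide whether some $\mathcal{T}^{*} \subseteq \mathcal{S}$ with $|\mathcal{T}^{*}| = k$ partitions $U$.

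From an X3C instance $(U, \mathcal{S})$, I build $G = (V, E)$ with a query vertex $q_u$ for each $u \in U$ and a set-vertex $w_S$ for each $S \in \mathcal{S}$; include the edge $\{q_u, w_S\}$ whenever $u \in S$, together with every $\{w_S, w_{S'}\}$ for $S \neq S'$, so that the set-vertices form a clique. Take $Q = \{q_u : u \in U\}$. A case analysis on endpoint types shows that $G$ has diameter exactly $3$: any two $w$-vertices are adjacent; every $q_u$ is at distance at most $2$ from any $w_S$ (assuming w.l.o.g.\ that $\mathcal{S}$ covers $U$); and two query vertices $q_u, q_{u'}$ are at distance $2$ if they share a set in $\mathcal{S}$, else $3$ via $q_u - w_S - w_{S'} - q_{u'}$.

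The correspondence to prove is: the X3C instance is a \textsc{yes}-instance if and only if $\min_{S \supseteq Q}\mathcal{I}(G[S]) \leq C^{*}$, where
\[
C^{*} \;=\; \tfrac{4}{3}\binom{3k}{2} + k(3k-4)
\]
is the inefficiency of $S^{*} = Q \cup \{w_S : S \in \mathcal{T}^{*}\}$ when $\mathcal{T}^{*}$ is an exact cover. The forward direction is a direct tally of pairwise distances in $G[S^{*}]$, using pairwise-disjointness of the $k$ chosen triples: there are $3k$ unordered query pairs at distance $2$ and $\binom{3k}{2} - 3k$ at distance $3$, each element lies in a unique chosen set so the $q$-$w$ pair counts simplify, and the $w$-$w$ pairs all contribute $0$. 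For the reverse direction, any $T \subseteq \{w_S : S \in \mathcal{S}\}$ that does not correspond to an exact cover must yield strictly larger inefficiency. For any set cover $T$, writing $t := |T|$ and letting $\alpha_T \leq 3t$ count the pairs of elements covered jointly by some $S \in T$, an analogous tally gives
\[
\mathcal{I}(G[Q \cup T]) \;=\; \tfrac{4}{3}\binom{3k}{2} + t(3k-3) - \tfrac{\alpha_T}{3} \;\geq\; \tfrac{4}{3}\binom{3k}{2} + t(3k - 4),
\]
so $\mathcal{I}(G[Q \cup T]) - C^{*} \geq (t - k)(3k-4) \geq 3k-4 > 0$ whenever $|T| \geq k+1$.

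The main obstacle will be making the non-cover argument tight. Writing $\ell \geq 1$ for the number of uncovered elements, each such $q_u$ is isolated in $G[Q \cup T]$; the isolation penalty contributes $\ell(6k + 2t - \ell - 1)/2$ to the unordered inefficiency $\mathcal{I}/2$, while the non-isolated portion contributes at least $\tfrac{2}{3}\binom{3k-\ell}{2} + \tfrac{t(3k-\ell-4)}{2}$, derived exactly as in the cover case but restricted to the $3k-\ell$ covered elements. Using the identity
\[
\binom{3k}{2} - \binom{3k-\ell}{2} \;=\; \tfrac{\ell(6k - \ell - 1)}{2},
\]
the desired inequality $\mathcal{I}(G[Q \cup T]) > C^{*}$ reduces to a polynomial inequality in $k, \ell, t$. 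A short case split on whether $t \geq k$ or $t < k$ (in the latter the constraint $\ell \geq 3(k-t)$ forces the isolation term to dominate) suffices to verify this inequality, completing the reduction; the diameter-$3$ property of the constructed $G$ then delivers both parts of the theorem simultaneously.
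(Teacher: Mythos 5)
Your proof is correct, but it goes by a genuinely different route than the paper's. The paper reduces from 3-SAT: each clause $C_i$ gets two size-$M$ blow-up sets $A_i,B_i$ (with all $A$-vertices and all $B$-vertices forming cliques) plus seven vertices encoding the satisfying assignments of its literals, joined when compatible; the threshold $B_1+B_2$ is engineered on two scales, the dominant $M^2$-scale term forcing every low-cost solution to include at least one assignment vertex per clause, and the lower-order term forcing the chosen assignment vertices to be pairwise compatible, hence to assemble into a satisfying assignment of $\phi$. You reduce instead from \textsc{Exact Cover by 3-Sets} with a much leaner gadget (element vertices as the query set, set vertices forming a clique) and replace the two-scale weighting by exact pair counting. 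I checked your arithmetic and it is sound: $C^*=\frac{4}{3}\binom{3k}{2}+k(3k-4)$ is indeed the (ordered-pair) inefficiency of an exact-cover solution; for any cover $T$ the identity $\mathcal{I}(G[Q\cup T])=\frac{4}{3}\binom{3k}{2}+t(3k-3)-\alpha_T/3$ holds; a cover with $t=k$ sets is automatically exact, and $t\ge k+1$ costs strictly more once $k\ge 2$; and in the non-cover case your bound reduces, via $\binom{3k}{2}-\binom{3k-\ell}{2}=\ell(6k-\ell-1)/2$, to showing $\frac{1}{3}\ell(6k-\ell-1)+\ell t+(t-k)(3k-4)>0$, which is immediate for $t\ge k$ and follows from $\ell\ge 3(k-t)$ together with $\ell\le 3k$ when $t<k$ --- so the case split you defer does close. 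Do state the harmless normalizations explicitly: assume $k\ge 2$ (otherwise $3k-4>0$ fails, and $k=1$ instances are trivial), assume every element occurs in some set, and ensure some pair of elements shares no set (e.g.\ by appending one disjoint dummy triple), so the constructed graph is connected with diameter exactly $3$ rather than $2$. Comparing the two approaches: the paper's large-$M$ construction makes the threshold comparison nearly automatic because the dominant term separates satisfiable from unsatisfiable by a wide margin, at the price of a bulky gadget; your reduction is smaller and more transparent, but all of its force sits in the exact tallies and the final inequality, which must be written out in full since there is no slack term to absorb errors.
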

\begin{proof}
We show a polynomial-time reduction from 3-SAT. Let $\phi = \bigwedge_{i=1}^m (l_i^1 \vee l_i^2 \vee l_i^3)$ be an
instance of 3-SAT with $m$ clauses, where
$l_i^j$ stands for the $j$th literal in clause $i$,
and all literals in a clause refer to different variables.

Let
$$ M    \triangleq 6 m ^2 + 1, $$
$$ B_1  \triangleq M^2 m \left(m - \frac{1}{2}\right),\quad B_2  \triangleq M m \frac{m - 1}{2}. $$

Given $\phi$, we construct a graph $G = (V, E)$ and a query set $Q \subseteq V$ as follows.
First we introduce the following vertices in $G$ for each clause $C_i = l_i^1 \vee l_i^2 \vee l_i^3$ of $\phi$:
\begin{itemize}
    \item two disjoint sets of vertices of size $M$, namely $A_i = \{a_i^1, \ldots, a_i^M\}$ and $B_i = \{b_i^1, \ldots,
    b_i^M\}$.
 \item a set $S_i = \{s_i^1, \ldots, s_i^7\}$ of 7 new vertices, representing all the assignments of the
    three literals of $C_i$ satisfying $C_i$. 
\end{itemize}

The vertex set of $G$ is $V = \bigcup_{i=1}^m A_i \cup \bigcup_{i=1}^m B_i \cup \bigcup_{i=1}^m S_i$.
The edges of $G$ are the following:
\begin{itemize}
    \item $(a_i^t, a_j^r)$ and $(b_i^t, b_j^r)$ if either $i \neq j$ or $r \neq t$ holds, for each $i, j \in [m]$ and $r, t \in [M]$;
    \item $(a_i^t, s_i^j)$ and $(s_i^j, b_i^t)$  for each $i \in [m], t \in [M], j \in [7]$; 
    \item $(s_i^j, s_{i'}^{j'})$ if $s_i^j$ and $s_{i'}^{j'}$ refer to compatible assignments, for each $i \in [m], t \in [M], j \in [7]$.
          Two assignments are \emph{compatible} if every variable in common variable receives the same truth value in both.
\end{itemize}

\noindent
The query set is $Q = \bigcup_{i=1}^m A_i \cup \bigcup_{i=1}^m B_i$.

Clearly $G$ and $Q$ can be constructed in time $\poly(m) = \poly(|\phi|)$.
Note also that the diameter of $G$ is 3 by construction.
It remains to be shown that the reduction is correct:

\begin{center}
$\phi$ is satisfiable $\Leftrightarrow G$ has a Minimum Inefficiency Subgraph of $Q$ with cost $\le B_1 + B_2$.
\end{center}

If $\phi$ is satisfiable, then our intended solution will contain a path of length two between each element of $A_i$ and
each element of $B_i$, through some element of $S_i$ (representing a partial assignment). Moreover, these partial assignments will be shown to be
extensible to a full satisfying assignment for $\phi$. Details follow.

First observe that for any $T \supseteq Q$, the following inequalities hold:
\begin{itemize}
    \item $d_T(a_i^t, a_j^r) = 1$ and $d_T(b_i^t, b_j^r) = 1$ for each $i, j \in [m]$ and $r, t \in [M]$ where either $i \neq j$ or $r \neq t$;
    \item $d_T(a_i^t, b_i^r) \ge d_G(a_i^t, b_i^r) = 2$ for each $i \in [m]$ and $r, t \in [M]$; equality holds if and only if $T \cap S_i \neq \emptyset$.
    \item $d_T(a_i^t, b_j^r) \ge d_G(a_i^t, b_j^r) = 3$ for each $i \neq j \in [m]$ and $r, t \in [M]$;
    equality holds if and only if $T \cap (S_i \cup S_j) \neq \emptyset$.
    \item $d_T(a_i^t, s_i) = 1$ and $d_T(s_i, b_i) = 1$ for each $i \in [m]$ and $s_i \in S_i \cap T$;
    \item $d_T(a_j^t, s_i) \ge d_G(a_j^t, s_i) = 2$ and $d_T(s_i, b_i) \ge d_G(s_i, b_j^t) = 2$ for each $i \in [m]$ and $s_i \in S_i \cap T$;
    equality holds if and only if $T \cap (S_i \cup S_j) \neq \emptyset$.
    \item $d_T(s_i, s_j) \ge 1$; equality holds if and only if $s_i$ and $s_j$ are compatible assignments.
\end{itemize}

Assume that $\phi$ is satisfiable and pick a satisfying assignment $f$ for $\phi$. For each $i \in [m]$, select the element $s_i \in S_i$ that represents the truth-value assignment
of $f$ on the variables of clause $C_i$ (which by assumption satisfies $C_i$). Let $T = Q \cup \{ t_i \mid i \in [m] \}$. Observe that for all $s_i, s_j \in T \setminus Q$,
assignments $s_i$ and $s_j$ are compatible. Then
$$ \mathcal{I}(G[T]) = \frac{1}{2} M^2 n + \frac{2}{3} M^2 \binom{m}{2} + 2 \cdot \frac{1}{2} M \binom{m}{2} = B_1 + B_2. $$

Conversely, consider any solution $T$. 
If $T \cap S_i = \emptyset$ for some $i$, then $d_T(a_i^t, b_i^r) \ge 3$ for at least $M^2$ pairs, so we must have
$$ \mathcal{I}(G[T]) \ge B_1 + \left(\frac{1}{2} - \frac{1}{3}\right) M^2> B_1 + B_2. $$
Otherwise $\mathcal{I}(G[T]) \ge B_1 + B_2$, with equality if and only if
for all $s_i, s_j \in T \setminus Q$, the assignments $s_i$ and $s_j$ are compatible.

Therefore, we conclude that if $\mathcal{I}(G[T]) \le B_1 + B_2$, then $T$ contains partial assignments for every clause and moreover, these partial assignments are pairwise compatible and hence can be extended to a full satisfying assignment for $\phi$, implying that $\phi$ is satisfiable.\end{proof}

\subsection{Greedy relaxing algorithm}
Given that finding the \emph{Minimum Inefficiency Subgraph} exactly is hard, we now search for an algorithm that approximates it accurately and efficiently.  One approach would be
to start from the whole graph and search for the subgraph that minimizes $\mathcal{I()}$.  Not only is this approach costly, but it is also highly unnecessary.  

Recall from Section
\ref{sec:problem} that the Network Inefficiency of a subgraph $S$ can be written as a difference of two terms $\mathcal{I}(S)=|S|(|S|-1)-C(S)$. Hence, when considering a candidate subgraph $S$ as a {\mis}, we can think of the cost as a balance of the two terms $|S|(|S|-1)$ and $C(S)$.  If the query vertices are far apart, then connecting them will require many vertices which will make the left-hand term grow faster than the right-hand term, and, as a result, $\mathcal{I}(Q)$ will be smaller.  
This shows that the cost of not connecting the query vertices at all, i.e.,  $\mathcal{I}(Q)$, acts as an upper bound tolerance on the candidate {\mis}, and implies that our search for {\mis} need not explore the whole graph, and can remain fairly local.

Motivated by this observation, we follow an approach based on first finding a connector, i.e. a subgraph $H$ of $G$ that connects all of $Q$, and then \emph{relaxing} the connectedness requirement by iteratively removing non-query vertices that incur a large inefficiency cost.  In the choice of initial connector there are two properties we desire: (1) it should contain a superset of vertices that are parsimonious in the sense that they are cohesive with $Q$, and (2) it should be small to prompt an efficient algorithm. Given the resemblance of the objective function based on shortest-path distances, and the fact of being parameter free, the \emph{Minimum Wiener Connector} \cite{ruchansky2015minimum} (\mwc) is the most natural choice.

We recall that the {\mwc} is the subgraph $H$ of $G$ that connects all of $Q$ and minimizes the sum of pairwise shortest-path distances among its vertices: i.e.,
\vspace{3mm}
$$
H = \argmin_{G[S] : Q \subseteq S \subseteq V} \sum_{\{u,v\} \in S} d_{G[S]}(u,v)
$$
\vspace{3mm}

Algorithm~\ref{algo:greedy} provides the detailed pseudocode of the proposed \emph{greedy relaxing algorithm} for minimum inefficiency subgraph, that we denote \greedy. Our
proposed algorithm takes as input a graph $G$ and a set of query vertices $Q$, and starts by constructing the \mwc, as the candidate connector $G[S]$ (line 1). Next, the algorithm
iteratively removes from $S$ the non-query vertex whose removal results in the smallest value of network inefficiency (lines 3--11), until all non-query vertices have been removed. Among all the intermediate subgraphs created during the greedy relaxation process, the subgraph with the minimum value of network inefficiency is returned (line 11).

\vspace{3mm}
\begin{algorithm} [h!]
\caption{ \textbf{--} \greedy: Greedy Relaxing Algorithm for \mis}
	\begin{algorithmic}[1]
	\Statex \textbf{Input:} Graph $G=(V,E)$, query vertices $Q \subseteq V$
	\Statex \textbf{Output} Selective connector $G[S]$ s.t. $Q \subseteq S \subseteq V$
	\State $G[S]\leftarrow {\mwc}(Q)$ (Algorithm 1 in \cite{ruchansky2015minimum})
\State $i \leftarrow 0$	
\While{$|S|>|Q|$}
        \State $G_i\leftarrow G[S]$
        \State $i\gets i + 1$
		\For {$u\in \{S\setminus Q\}$}
			\State $c_u \leftarrow \mathcal{I}(G[S\setminus \{u\}])$
		\EndFor
		\State $v\leftarrow\argmin_{u} c_u$
        \State $S \leftarrow S \setminus \{v\}$
	\EndWhile
	\State \textbf{Return} $\argmin_{j \in [0,i]} \mathcal{I}(G_j)$
	\end{algorithmic}
\label{algo:greedy}
\end{algorithm}

Parsimonious vertex addition is guaranteed by starting with ${\mwc}(Q)$ and then ``relaxing'' it.
At one extreme, if a cohesive subgraph exists that connects all the vertices in $Q$, then this would be captured by $G_0 = {\mwc}(Q)$. At the other extreme, if no good connection exists, the minimum inefficiency is obtained by  $Q$ itself without adding any vertex (captured by $G_j$ with $j = |S\setminus Q|$). 
As far as the other two design requirements of outlier tolerance and identification of multiple communities, they are both satisfied by the fact that the solution subgraph is not necessarily connected: singleton solution vertices may be interpreted as outliers, while every (non-singleton) connected component can be viewed as corresponding to a different community. 
Moreover, our algorithm complies with the induced-subgraph assumption, thus being able to output \emph{general subgraphs}.

\spara{Computational complexity.} Typically, the most time-consuming step of $\ouralg$, is the extraction of the \mwc\ (line 1), which may be computed in time
$\widetilde{\mathcal{O}}(|Q| \cdot |E(G)|)$~\cite{ruchansky2015minimum}. In fact, the subgraphs returned are typically not much larger than the
query set itself, and the remainder of the algorithm only operates on the subgraph induced by the solution.
Let $S$ denote the set of vertices corresponding to \mwc. We analyze the cost of steps 2--12 in terms of
$\tilde{n} = |S|$ and $\tilde{m} = |E[S]|$ (the number of vertices and edges of the subgraph induced by $S$, respectively).
Keep in mind that, unless $|Q|$ contains a sizable fraction of the graph, $\tilde{n}$ and $\tilde{m}$ are usually much smaller than $|V|$ and $|E|$.

Each iteration of the \textbf{while} loop performs $|S \setminus Q| \le \tilde n$ iterations of an all-pairs shortest path computation on (a subgraph of) $G[S]$.
All pairs-shortest paths in the unweighted, undirected graph $G_i$ may be computed in time $\mathcal{O}(\tilde{n} \cdot (\tilde{n} + \tilde{m}))$.
Hence, the \textbf{while} loop takes time $\mathcal{O}(\tilde{n}^2 \cdot (\tilde{n} + \tilde{m}))$, and since there are
$|S\setminus Q| \le \tilde n$ iterations, the overall complexity of \ouralg, excluding the time spent on line 1, is $\mathcal{O}(\tilde{n}^3(\tilde{m}+\tilde{n}))$.

\subsection{Baselines}\label{subsec:baselines}
In our empirical comparison (Section \ref{sec:experiments}), besides comparing with the state-of-the-art methods already listed in Section \ref{subsec:first_comparison}, we 
justify the appropriateness of our choices. In particular, we need to show that $(i)$ \mwc\ is a good choice as starting connector, and $(ii)$ greedily relaxing the connector, while giving us an efficient search, does not lose much in quality with respect to an exhaustive search.

For the first point, we will compare against two variants of the greedy relaxing algorithm, which start with different connectors: the \emph{centerpiece subgraph} \cite{CenterpieceKDD06} (we denote this variant \greedycps), and the \emph{cocktail party subgraph}~\cite{SozioKDD10} (denoted \greedyctp).

For the second point, we consider an algorithm that starts with the \mwc, but instead of relaxing the connector greedily, it performs an \emph{exhaustive} search by considering the removal of all possible subsets of non-query vertices $S\setminus Q$.  We denote this algorithm \exh.
Since \exh\ explores a number of subgraphs which is exponential in $|S\setminus Q|$, it is
clearly computationally expensive, and becomes unfeasible for large $S$. For this reason, \exh\ cannot be started with \cps\ or \ctp\ as the initial connector to be relaxed, as both \cps\ and \ctp\ typically return a much larger starting subgraph.

\section{Experiments}
\label{sec:experiments}

In this section, we report our empirical analysis which is structured as follows.
In Section~\ref{subsec:qualityofgreedy}, we study the quality and efficiency of connectors that our greedy relaxing algorithm \greedy\ creates, by comparing with its exhaustive counterpart \exh, and with the greedy variants that start from different connectors (\greedycps\ and \greedyctp).
Then, we analyze the structural features of our proposed selective connector \mis, and we compare it with other methods in the literature which provide different selective connectors, namely \bh\ \cite{gionisbump} and \mdl\ \cite{akoglu2013mining} (Sections~\ref{subsec:qualityofmis} and~\ref{subsec:pco}).
Finally, in Section \ref{subsec:scalability}, we discuss the scalability of our method.

\spara{Datasets.}
We experiment with both synthetic and real-world datasets, from a variety of domains.
The synthetic datasets allow us to control various properties of the graphs and, consequently, the expected outcomes of the selective connector algorithms.
Table~\ref{tab:all-nets} provides a summary of real-world datasets used.
All our datasets, with the exception of \dataset{football}\footnote{\url{http://www-personal.umich.edu/~mejn/netdata/}}, come with auxiliary ground-truth communities information.\footnote{\url{http://nodexlgraphgallery.org/pages/Graph.aspx?graphID=26533}}\footnote{\url{http://socialcomputing.asu.edu/datasets/Flickr}}\footnote{\url{https://snap.stanford.edu/data/\#communities}}

\spara{Query selection.}
We define different query sets by exploiting the pre-existing community structure. In particular, we use three parameters:  the number of query vertices, the number of communities they span, and the minimum number of query vertices that should come from the same community. In more details, given a graph $G=(V,E)$ and a community membership vector $C$, where $C(u)=c_i$ indicates that vertex $u$ participates in community $i$, we generate a query set $Q$ with three parameters: $n$, $m$, and $k$ by taking the following steps:
$(1)$ select a random community $c_i$; $(2)$ select $n$ vertices that belong to $c_i$; $(3)$ select $m$ vertices across $k$ other communities $c_j\neq c_i$.
By this construction, we can cover a range of query types.
For example, setting $m=k$ gives the setting of $n$ vertices from one community and $m$ outliers.

\begin{table}[t!]
\caption{Summary of graphs used. $\delta$: density, ad: average degree, cc: clustering coefficient, ed: effective diameter.\label{tab:all-nets}}
\centering
\vspace{-2mm}
\small
\tabcolsep=0.12cm
\begin{tabular}{lrrrrrrr}
\toprule
\hspace{-4mm}& Dataset      & $|V|$   & $|E|$   & $\delta$  & ad    & cc    & ed  \\
\midrule
\hspace{-4mm}& \dataset{football}$^7$ & 	115		&	613		&	9.4e-2	&	21.3	&	0.40	&	3.9  \\
\hspace{-4mm}& \dataset{kdd14twitter}$^8$       &  1,059   &   2,691 & 4.8e-3 & 5.1 & 0.46 & 8   \\
\hspace{-4mm}& \dataset{flickr}$^{9}$      & 80,513      & 5,899,882     &  1.8e-3& 146.5& 0.17 & 4 \\
\hspace{-4mm}& \dataset{amazon}$^{10}$      & 334,863    & 925,872    &  1.6e-5 &5.5 &0.39 & 15\\
\hspace{-4mm}& \dataset{dblp}$^{10}$       & 317,080    & 1,049,866    &  2.1e-5	&	6.62	&	0.63	&	8.2\\
\hspace{-4mm}& \dataset{youtube}$^{10}$        & 1,138,499    & 2,990,443    & 4.6e-6	 &	5.27	&	0.08	&	6.5\\
\hspace{-4mm}& \dataset{livejournal}$^{10}$       & 3,997,962  & 34,681,189   &  4.3e-6	&	17.3	&	0.28	&	6.5  \\
\bottomrule
\end{tabular}
\vspace{3mm}
\end{table}


\subsection{Comparison with baselines}\label{subsec:qualityofgreedy}
We first report the comparison of \greedy\ against the three baselines that we introduced in Section \ref{subsec:baselines}: \exh, \greedycps, and \greedyctp.
Due to the computational complexity of the three baselines, here we consider only small synthetic graphs using the Community Benchmark generator\footnote{\url{https://sites.google.com/site/santofortunato/inthepress2}} as it allows us to easily explore a large variety of graphs.
Specifically, we generate a graph of $200$ vertices with a maximum degree of $100$, and vary the following parameters in the respective ranges: clustering $[0.1,0.5,0.8]$, mixing $[0.1,0.5,0.8]$, and average degree $[5,10,30]$; other parameters are left as default.
In addition, we vary the query parameters $n\in [0,10]$, $m\in[0,10]$, and $k\in[0,m+1]$ to regulate the query set $Q$, and repeat each setup 10 times.
\begin{table}[t!]
\centering
\caption{Quality comparison (w.r.t. $\mathcal{I(\cdot)}$) on synthetic data.}
\vspace{-2mm}
\small
\begin{tabular}{ c | c | c | c |}
\multicolumn{1}{c}{}   & \multicolumn{1}{c}{\greedycps}      &   \multicolumn{1}{c}{\greedyctp}      &  \multicolumn{1}{c}{\exh}   \\    \cline{2-4}
 {\greedy}  $= $          & $0.70$& $0.68$& $0.99$\\ \cline{2-4}
{\greedy}   $ < $          & $0.18$& $0.21$&$0.00$ \\ \cline{2-4}
 {\greedy}   $>$          &$0.12$ &  $0.11$& $0.01$\\ \cline{2-4}
\end{tabular}
\label{tab:comparison-with-exh}
\end{table}

Table~\ref{tab:comparison-with-exh} reports the aggregate result of these experiments (portion of setups that each condition on the left column applies).
Although \greedycps\ and \greedyctp\ start with a larger connector and thus a larger number of vertices to pick from,
we find that for the majority of the time (68-70\%), initializing with the three different connectors leads to a solution of equal cost.
In particular, we observe that on graphs with high clustering and low mixing (properties exhibited by real-world networks with community structure) \greedy\ initialized with
{\mwc} consistently achieves lower cost solutions (up to 90\% of the runs) than the baselines.
The larger size of the connectors {\cps} and {\ctp}, not only lacks the quality advantages, but also it leads to much larger running time, as reported in
Figure~\ref{fig:basetimes}.
Taken together, these observations illustrate that {\mwc} provides a compact connector that contains a superset of the vertices important for the selective connector, as {\cps} and {\ctp} often do, meanwhile remaining small and cost efficient -- making it a great starting point for {\greedy}.
Finally, when comparing {\greedy} and {\exh}, we see that {\greedy} achieves the same cost as {\exh} in almost every instance.
Since the {\exh} algorithm is computationally demanding, {\greedy} offers an efficient and highly accurate alternative.

 \begin{figure*}[t!]
 \vspace{-2mm}
   \centering
   \begin{tabular}{ccc}
 \hspace{-8mm}\includegraphics[scale=.1]{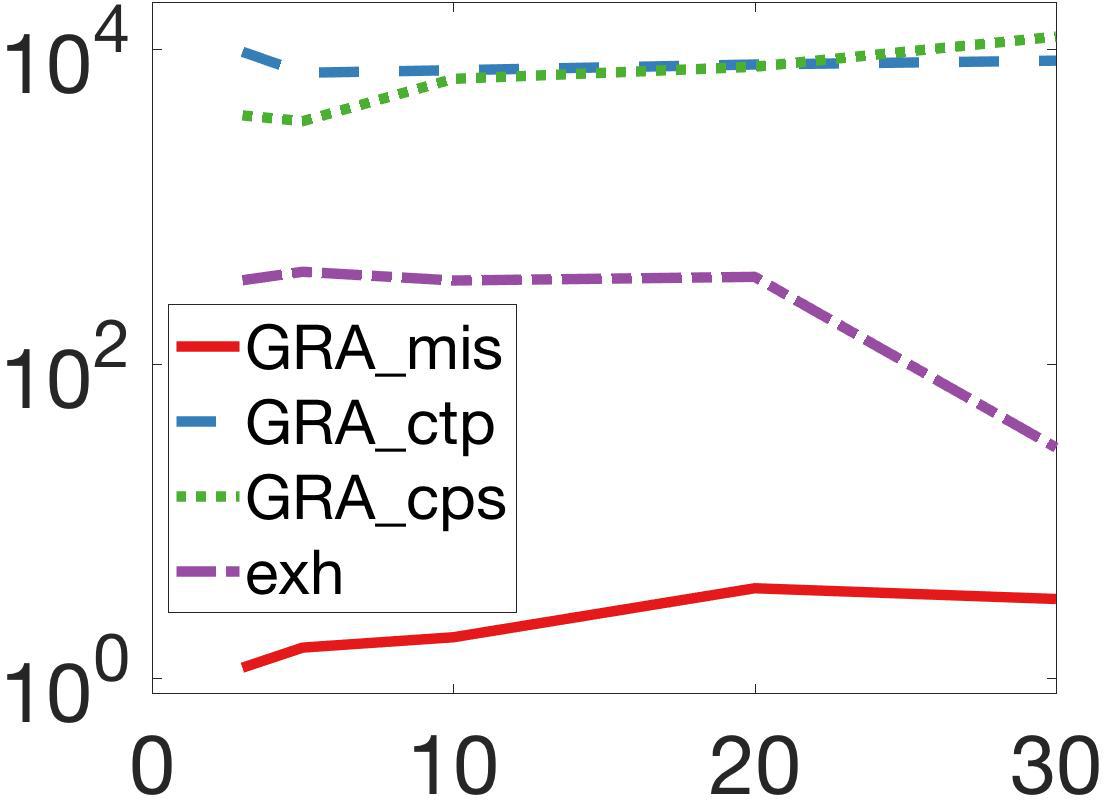} &
 \hspace{-3mm}\includegraphics[scale=.1]{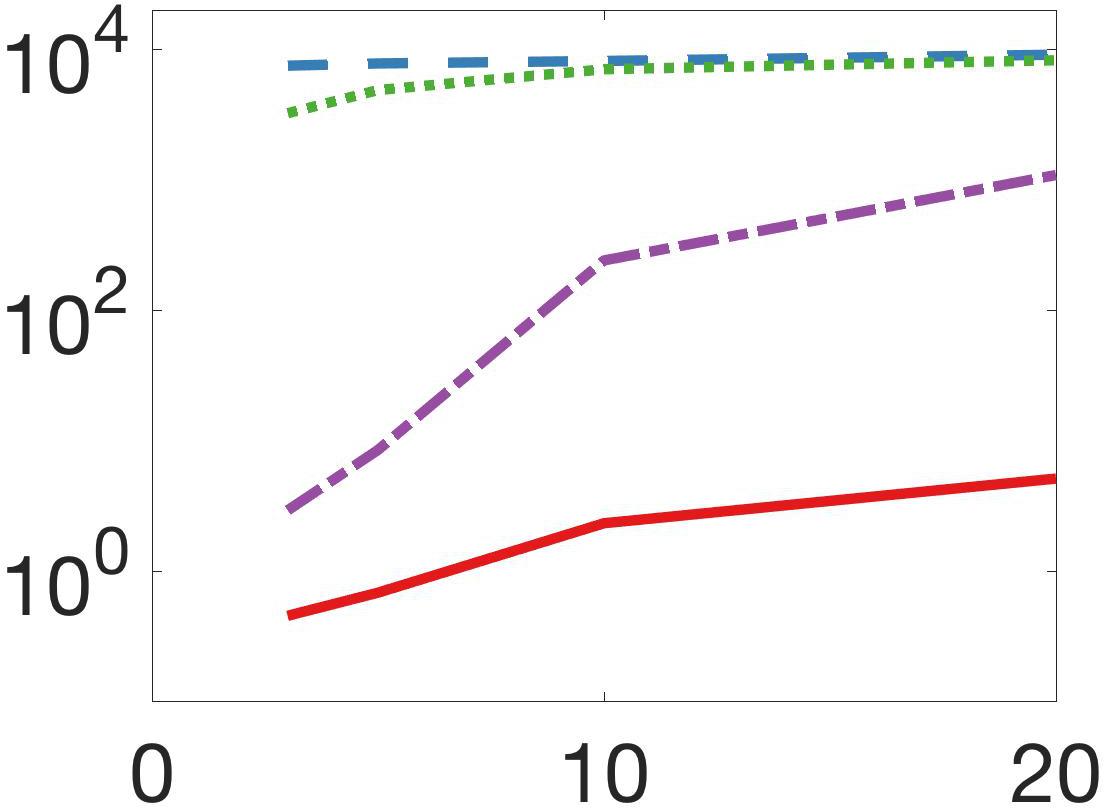} &
\hspace{-3mm} \includegraphics[scale=.1]{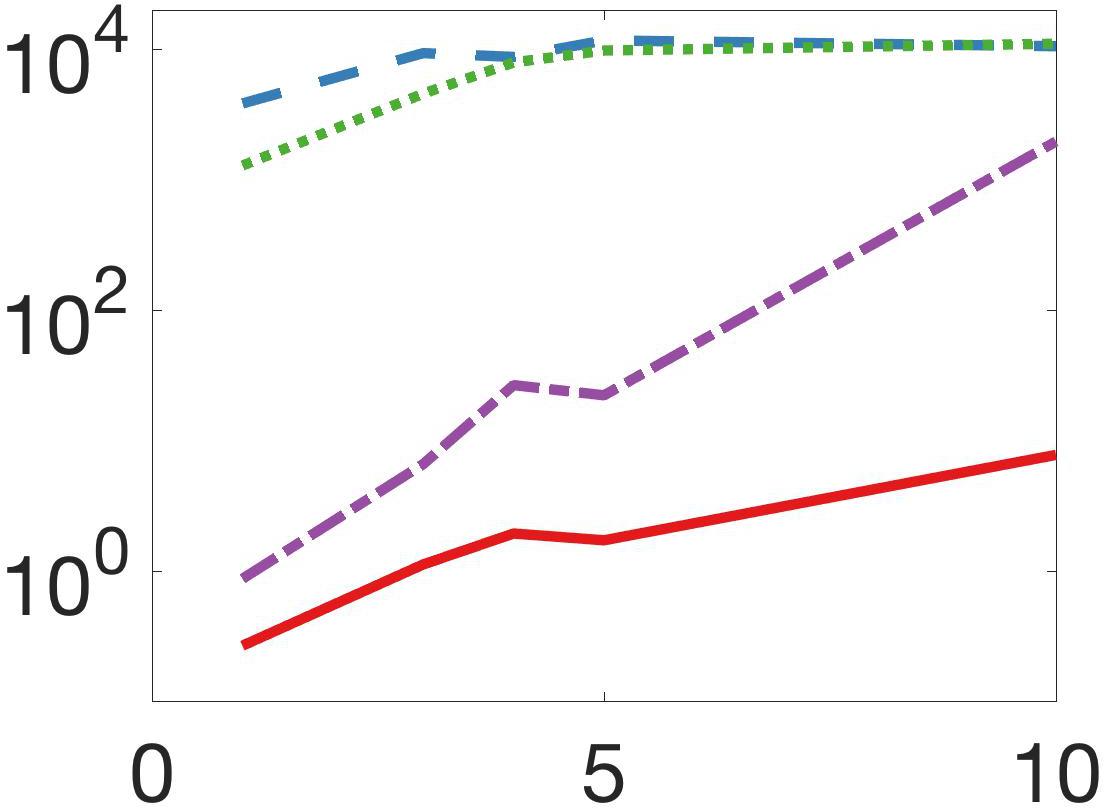}\\
$n$ & $m$ & $k$
    \end{tabular}
    \vspace{-2mm}
 \caption{Runtime (seconds) by varying the parameters $n$, $m$, and $k$, that control the selection of the query set $Q$.\label{fig:basetimes}}
\vspace{2mm}
 \end{figure*}

\subsection{Comparison with state-of-the-art methods}\label{subsec:qualityofmis}

Next, we study the characteristics of \mis\ as a selective connector, comparing with the  state-of-the-art methods that produce a selective connector, namely {\bh} and {\mdl}.
Limited by the runtime of {\mdl}, we study the performance on small synthetic (already described in Section \ref{subsec:qualityofgreedy}) and real networks (\dataset{football} and \dataset{kdd14twitter}). Query set selection is done with parameters  $n=10$, $m=10$, and $k=4$, and averaged over 20 runs. The results, reported in Table~\ref{tab:quality1}, are representative of the behavior observed in other query setups (not reported due to space constraints). 


\begin{table*}[t!]
\centering
\caption{Characteristics of the subgraph $H$ extracted:
inefficiency $\mathcal{I}(H)$, number of vertices $|V(H)|$, density $\delta(H)$, average betweenness $bc(V(H)\setminus Q)$ and harmonic centrality $hc(V(H)\setminus Q)$ of the non-query vertices added to the solution. Datasets: small synthetic ($d_1$), \dataset{football}  ($d_2$), and \dataset{kdd14twitter} ($d_3$).
\label{tab:quality1}}
\vspace{-2mm}
\normalsize
\setlength\tabcolsep{2pt}
\begin{tabular}{r|ccc|ccc|ccc|ccc|ccc|}
  \multicolumn{1}{c}{} &
  \multicolumn{3}{c}{$\mathcal{I}(H)$} &
   \multicolumn{3}{c}{$|V(H)|$} &
    \multicolumn{3}{c}{$\delta(H)$} &
    \multicolumn{3}{c}{$bc(V(H)\setminus Q)$} &
    \multicolumn{3}{c}{$hc(V(H)\setminus Q)$} \\
    \cline{2-16}
   & $d_1$ & $d_2$ & $d_3$ & $d_1$ & $d_2$ & $d_3$ & $d_1$ & $d_2$ & $d_3$ &
   $d_1$ & $d_2$ & $d_3$ & $d_1$ & $d_2$ & $d_3$ \\   \cline{2-16}

\mis & \textbf{136.19} & \textbf{107.99} & \textbf{174.08}  
         & \textbf{21.50} &\textbf{20.68} &\textbf{20.68}  
         &\textbf{0.19} &\textbf{0.24} &\textbf{0.23} 
        &\textbf{0.31} &\textbf{0.4} &\textbf{0.36} 
        &\textbf{13.43} & \textbf{10.93} & \textbf{9.06}\\ 

\bh & 158.54 & 120.16 & 215.99  
         & 23.25 & 21.84 & 25.4 
        & 0.17 & 0.22 & 0.22 
        & 0.24 & 0.25 & 0.22 
        & 10.69 & 9.49 & 7.45\\ 
 \mdl & 178.63 &142.93 & 289.07  
         & 21.08 & 20.8 & 26.2
         & 0.08 & 0.09 & 0.09 
        & 0.19 & 0.21 & 0.04 
        & 2.07 & 6.67 & 3.56\\ 
 \cline{2-16}
\end{tabular}
\vspace{3mm}
\end{table*}

In Table~\ref{tab:quality1}, we see that, not surprisingly, {\mis} consistently achieves lower network inefficiency than the other algorithms.
All three methods follow the parsimonious vertex addition principle, returning very compact connectors, usually containing a small number of additional vertices over the query set (recall that in these experiments $|Q|=20$. However, we can see that the solutions returned by \mdl\ have very low density; this is not surprising as the goal of \mdl\ \cite{akoglu2013mining} is to find pathways, not dense substructures.
Similar arguments hold for the centrality measures of the additional vertices.
In all of these measures, \bh\ is closer to \mis, although \mis\ consistently outperforms the other two methods in all the measures.

\subsection{Parsimony, communities, and outliers}\label{subsec:pco}
Next, we move to a thorough evaluation of \mis\ on larger datasets.
We generate query sets by varying $n\in[0,20]$, $m\in [0,20]$, and $k\in[1,m]$.
By varying these parameters, we can check whether the extracted \mis\ exhibits the expected behavior.

The leftmost plot in Figure~\ref{fig:behavior} shows the solution size ($|S|$, on the Y-axis) as a function of the query size ($|Q|$, on the X-axis).
By the parsimonious vertex addition principle, we would like the two to be close to one another: this is exactly the case in all datasets.
For large query sets, regardless of the community spread ($m$), the {\mis} remains fairly small.
This property is crucial for applying \mis\ in real world scenarios where it is important to produce solutions that are easy to visualize and interpret.

The central plot shows the number of connected components ($\#CC$, on the Y-axis) as a function of the number communities in $Q$ ($k$, on the X-axis) .
If the \mis\ is able to detect the communities, we would expect these two quantities to be similar: again as most of the lines lie close to the diagonal, we can conclude that the \mis\ exhibits the expected behavior w.r.t. detection of multiple communities.

Finally, the rightmost plot of Figure~\ref{fig:behavior} shows the number of singleton vertices ($|Q_d|$, on the Y-axis) in the solution as a function of $m$, for the setups where $k=m$.
Recall that when $k=m$, the query set $q$ contains $n$ vertices from one community and $m$ vertices, each from a different community.
The nearly linear trend observed in all datasets shows that the {\mis} successfully identifies and disconnects outlier vertices -- satisfying the desired outlier detection property.

 \begin{figure*}[t!]
   \centering
   \begin{tabular}{ccc}
 \hspace{-6mm}\includegraphics[scale=.22]{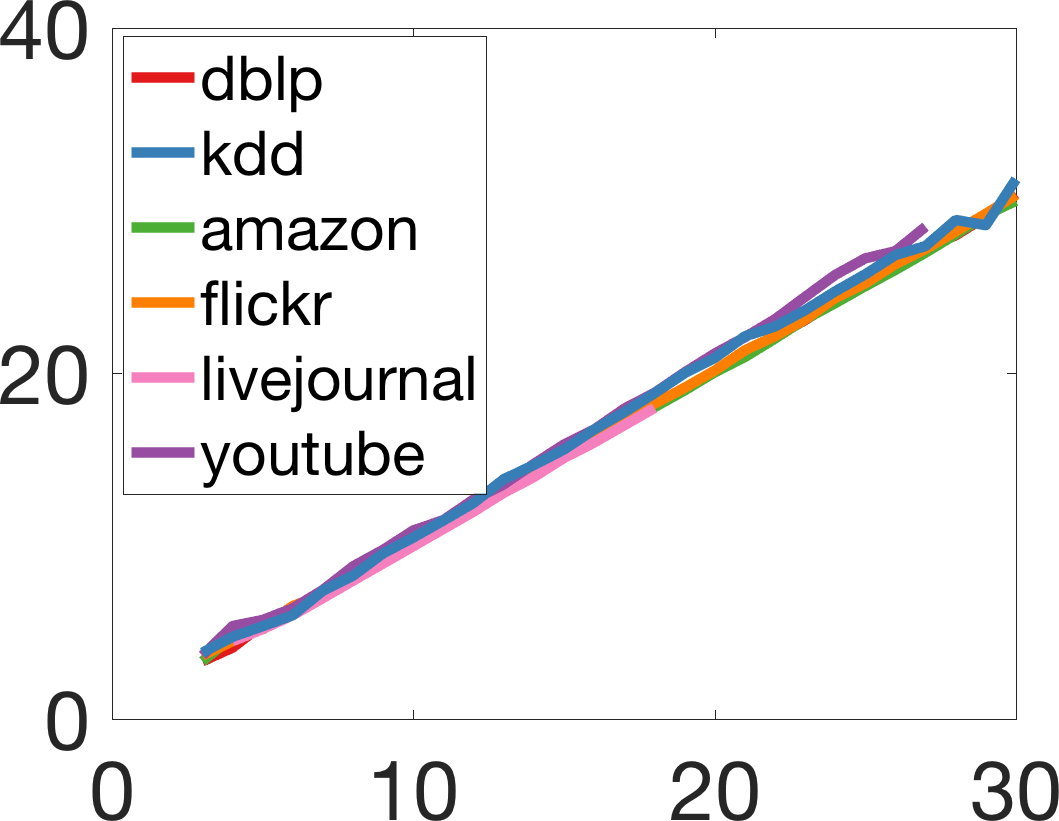} &
 \hspace{-3mm}\includegraphics[scale=.22]{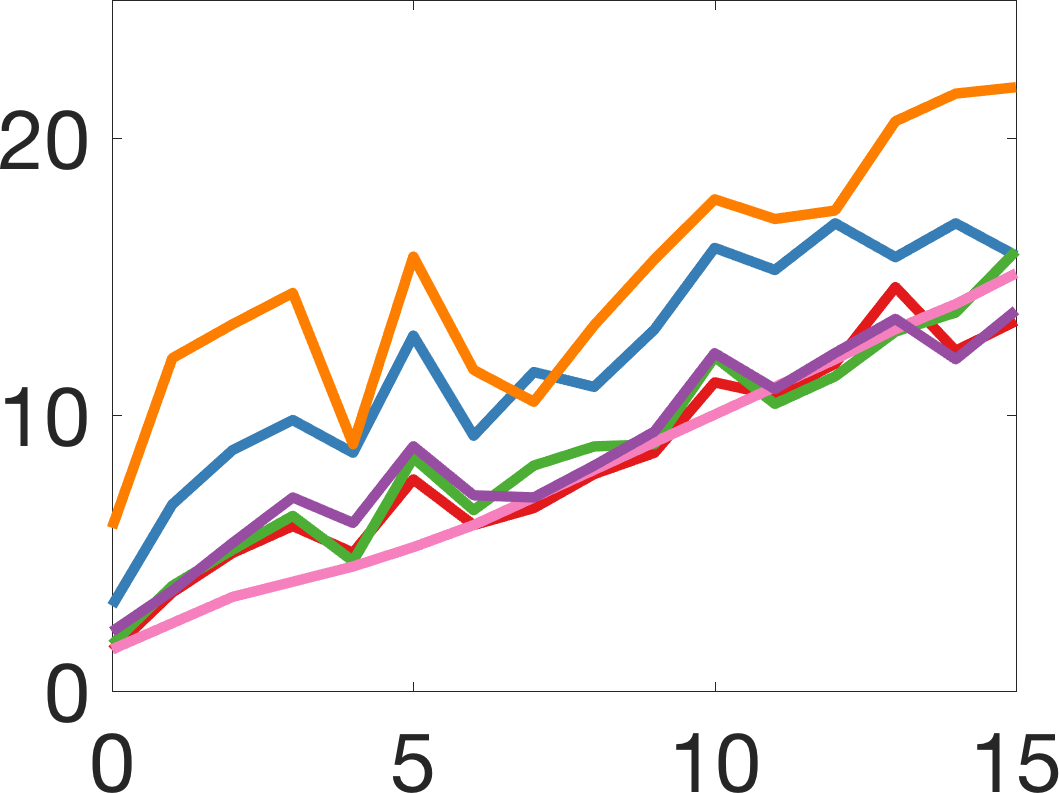} &
\hspace{-3mm} \includegraphics[scale=.22]{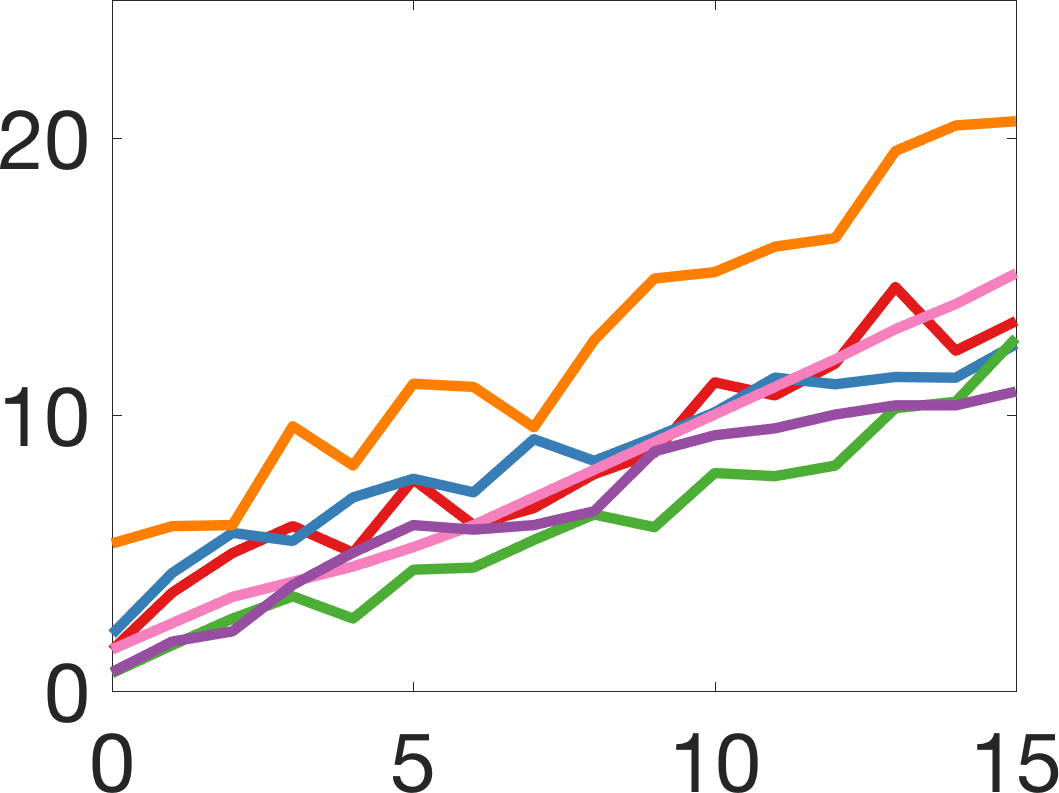}\\
$|Q|$ vs. $|S|$ & $k$ vs. $\#CC$ & $m$ vs. $|Q_d|$
    \end{tabular}
 \caption{Properties of the Minimum Inefficiency Subgraph on the large datasets with a large variety of query setups.\label{fig:behavior}}
\vspace{4mm}
 \end{figure*}

\subsection{Scalability considerations}\label{subsec:scalability}
Our Python implementation of the {\greedy} algorithm can easily run on large graphs. For instance, on \dataset{livejournal} ($|V| = 4M, |E| = 34M$) with $|Q| = 10$, \greedy\ takes less than 10 minutes on an Intel-Xeon CPU E5-2680 \@ 2.70GHz equipped with 264Gb of RAM. Of these 10 minutes, only 7 seconds are taken by the greedy relaxing algorithm, while the rest of the time is spent on computing the initial connector.
In case scaling to even larger graphs is necessary,  the extraction of the \mwc\  can be parallelized
as described
in~\cite{ruchansky2015minimum}. Other techniques can be used to speed up {\greedy} further:
\squishlist
\item Constrain the re-computation of shortest paths to operate only in the connected components that are affected by the removal of each vertex (e.g., as applied in~\cite{qubeV2-2016joong} for computing betweenness centrality of vertices).
\item Use techniques for dynamic shortest paths computation (e.g.,~\cite{qubeV2-2016joong, kas13betweenness,ramalingam92incremental, kourtellis14streamingbetw}) that keep always up-to-date the true distances between vertices, while the subgraph is changing.
\item Use approximation or oracle-based techniques to estimate the shortest distance between two vertices in the subgraph (e.g. as in~\cite{potamias09fast-shortestpaths, qi2013distanceoracle, bergamini2015approximatingbetwn, sommer16approx-shortestpaths}).
If the vertices are close in the graph (e.g., within the same community), such techniques will estimate distances between vertices that are very close to the true distances.
If the vertices are far in the graph, the estimated distances will be high; this output can be a quick hint for the algorithm to avoid trying to connect them.
\item Parallelize the computation (or approximation) of the shortest paths using $|S|$ parallel threads, one for each source in the subgraph under investigation (e.g., as in~\cite{kourtellis14streamingbetw}).
\squishend
These techniques are well studied in the literature and are beyond the scope of the present study.

\section{Case studies}
\label{sec:casestudies}


In this section, we explore several interesting datasets that act as anecdotal evidence of the utility of {\mis} in the real world, aside from the application with the human connectome described in Section~\ref{subsec:first_comparison}.  All the datasets and contextual information are publicly available.
 For sake of comparison and completeness, we also report the connectors produced by {\bh} and {\mdl} for the same queries.

\spara{Cohesive meal creation.}  Recently there has been increased attention on the science of food, including recipe recommendation~\cite{teng2012recipe} and matching foods with complimentary chemical flavor profiles~\cite{ahn2011flavor}.  In fact, a website was developed for identifying ingredients with profiles similar to a given single ingredient.\footnote{https://www.foodpairing.com/en/home}
The graph is constructed by using the $1$k ingredients as vertices, and adding edges between ingredients that share chemical flavor profiles.\footnote{http://www.nature.com/articles/srep00196} We consider the setting of preparing a meal in a foreign cuisine based on a few candidate ingredients; clearly it is desirable not only to match ingredients that form a delicious meal, but also to avoid using ingredients that do not fit well with all of the others.

 \begin{figure}[H]
   \centering
   \begin{tabular}{ccc}
 \mis & \bh & \mdl \\
 \hspace{-1mm}\includegraphics[scale=.24]{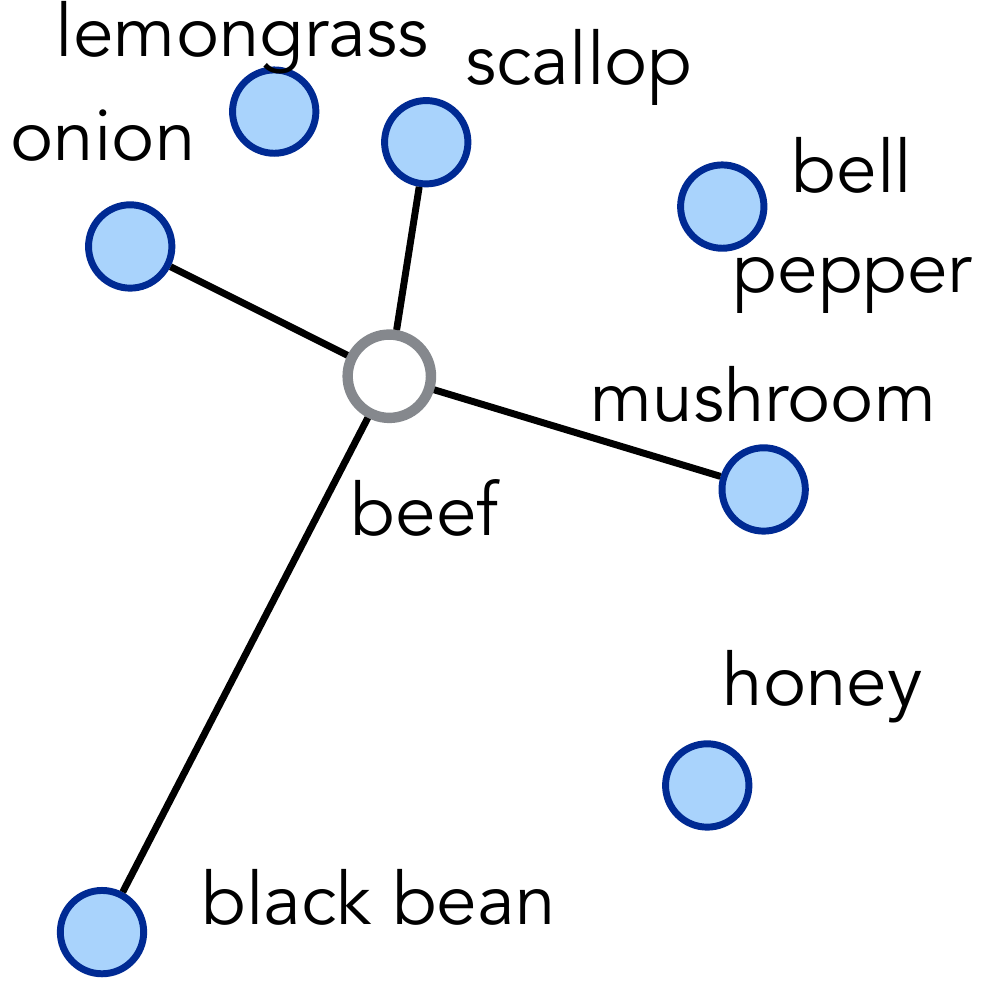} &\
 \hspace{-1mm}\includegraphics[scale=.24]{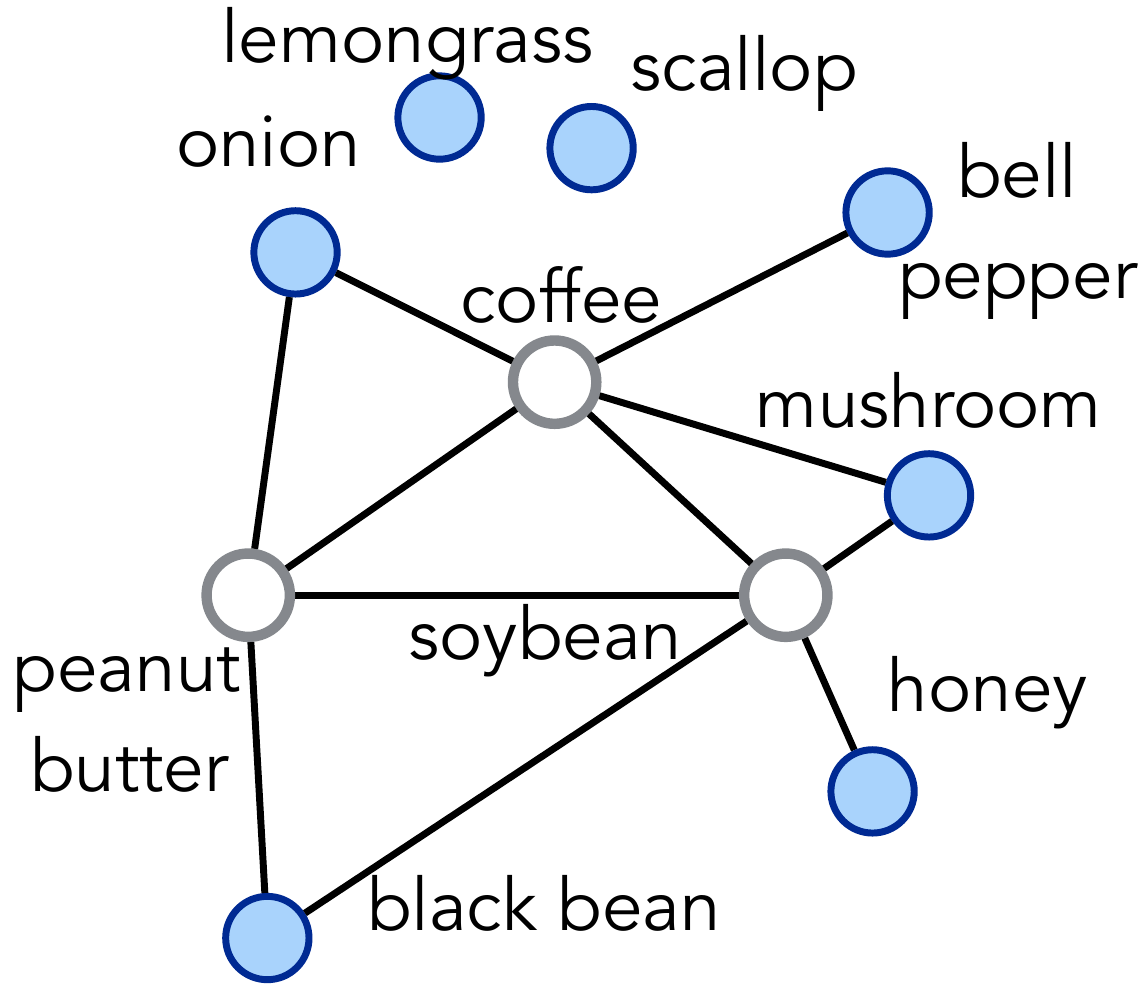} &
\hspace{-1mm} \includegraphics[scale=.24]{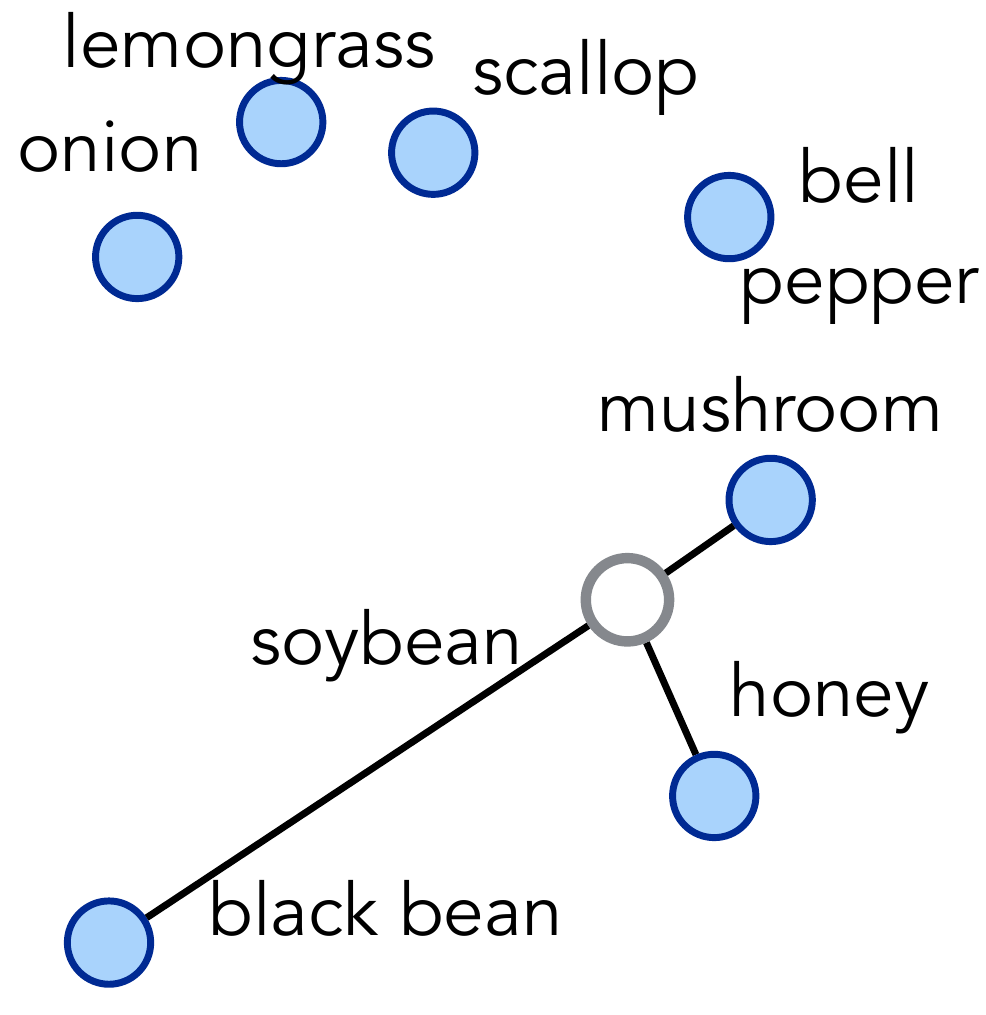}
    \end{tabular}
    \vspace{-4mm}
 \caption{Food network example: query vertices in blue.\label{fig:food}}
\vspace{-2mm}
 \end{figure}

In Figure~\ref{fig:food}, we can see that the {\mis} is obtained by only adding the very central vertex \textsf{beef}, and forming a cohesive meal: onion, beef, scallop, beans, and mushroom. At the same time, ingredients such as honey that do not fit well with the component are disconnected.
In contrast, both {\bh} and {\mdl} suggest unappetizing combinations such as peanut butter and onion, or beans and mushrooms with honey -- missing the simple connector formed by incorporating \textsf{beef} as an ingredient.

\vspace{2mm}

\spara{Functional protein disease association.}  We next consider {\mis} as an aid for biological discovery, as mentioned in Section~\ref{sec:intro}.
We obtain a protein-protein-interaction (PPI) network of $22$k human proteins with edges denoting interactions.\footnote{Data was collected from http://string-db.org/}  We simulate the setting in which a biologist may query a set of vertices and inquire about their possible disease association as a guide for actual lab experimentation.  

 \begin{figure}[H]
   \centering
   \begin{tabular}{ccc}
    \mis & \bh & \mdl \\
 \hspace{-7mm}\includegraphics[scale=.24]{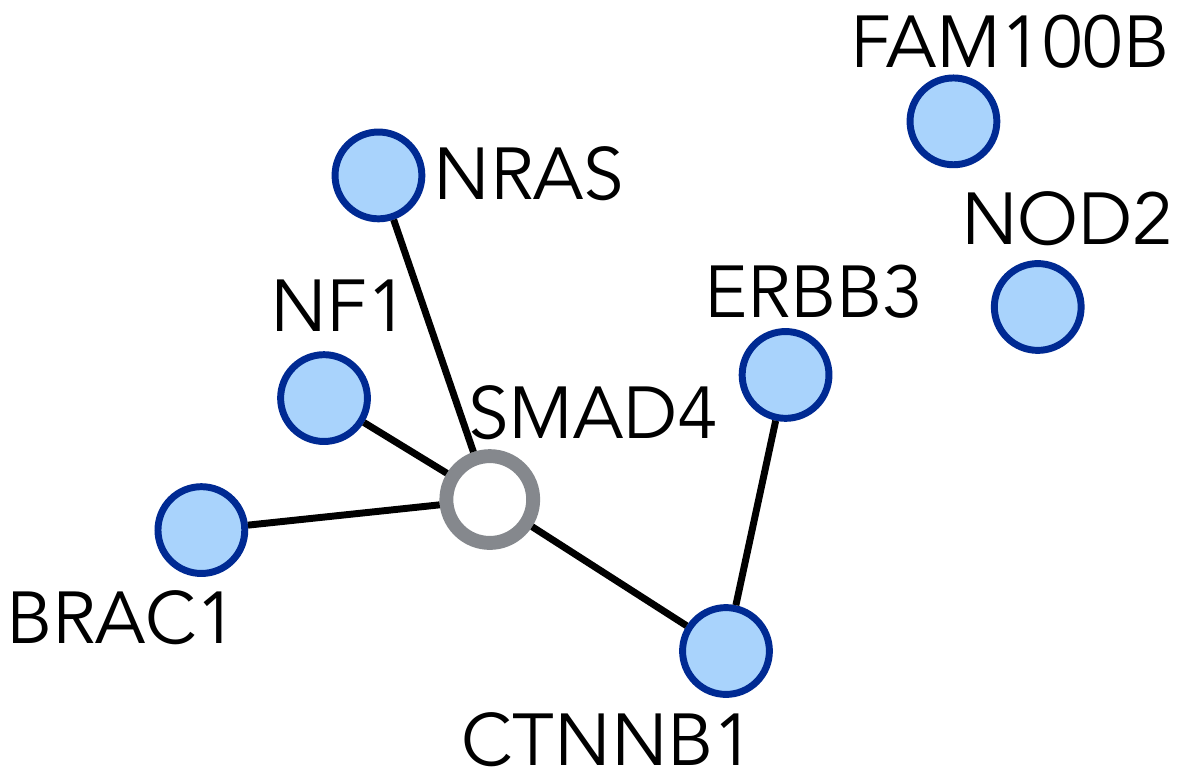} &\
 \hspace{-3mm}\includegraphics[scale=.24]{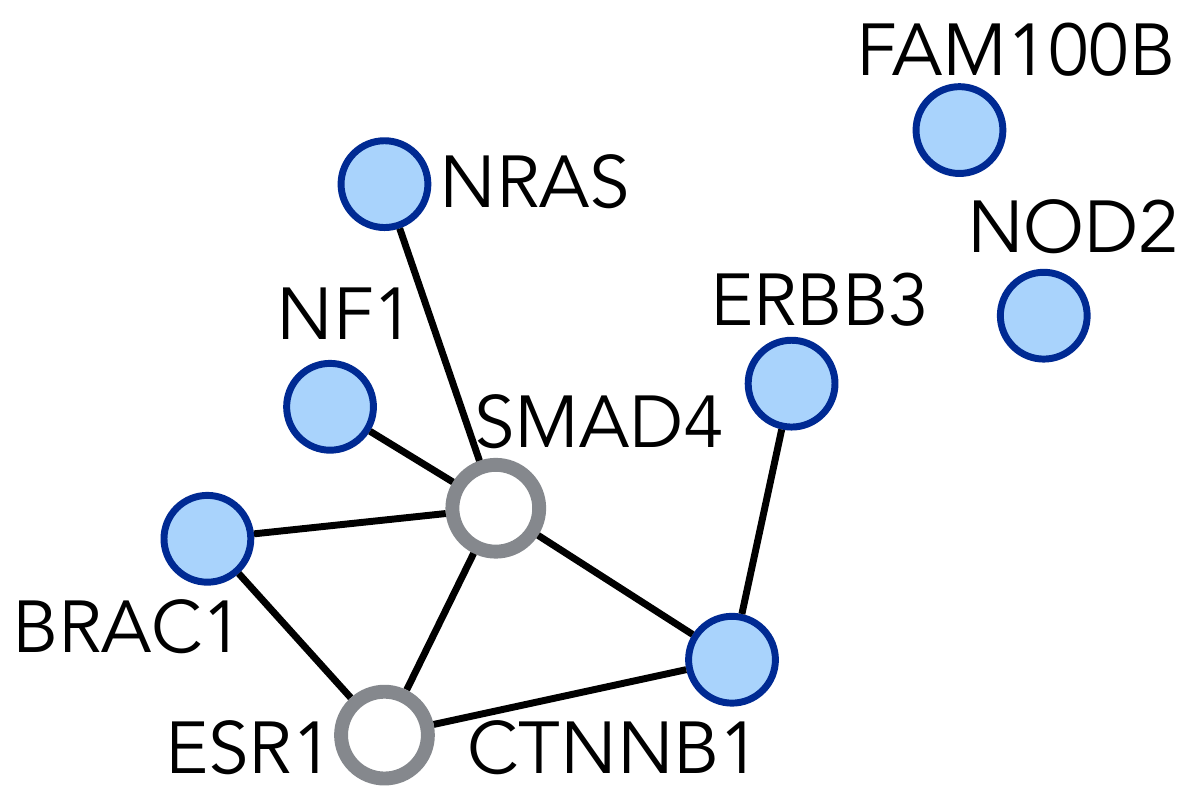} &
\hspace{-3mm} \includegraphics[scale=.24]{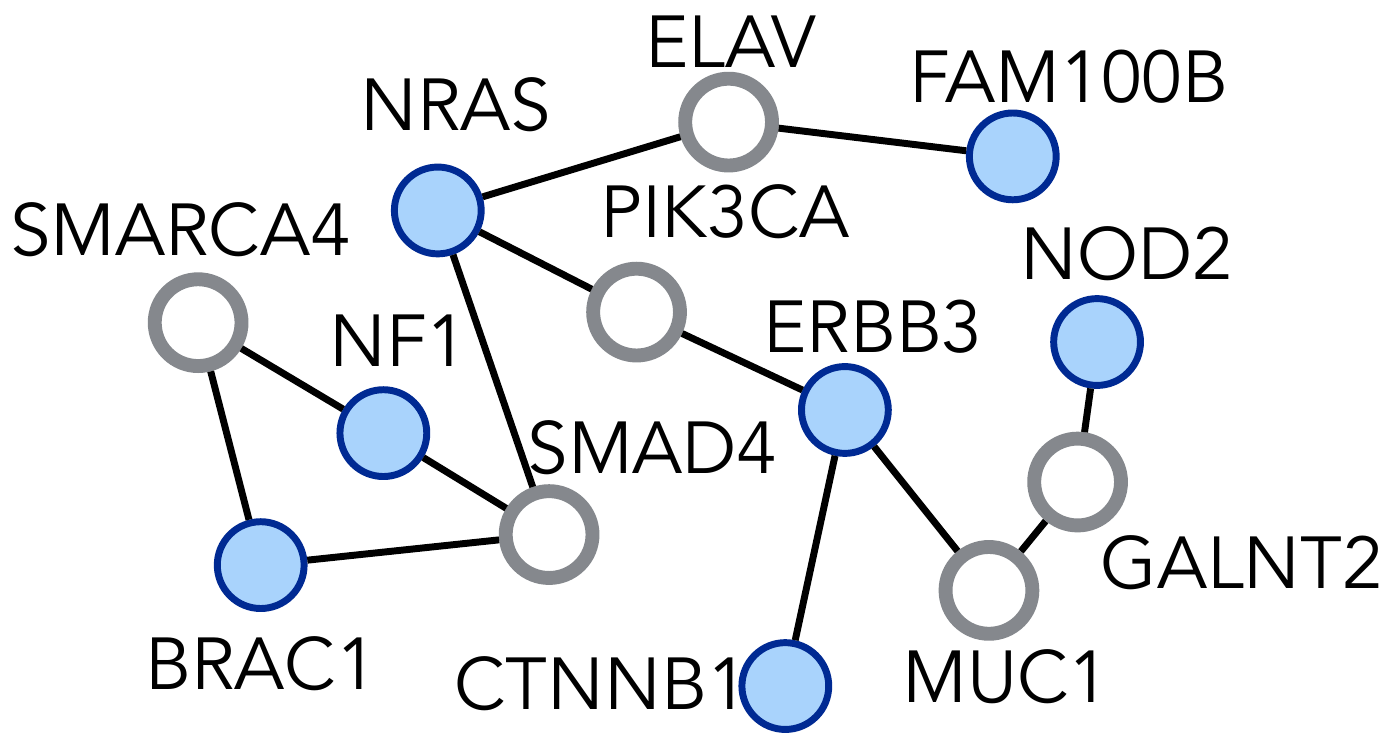}
    \end{tabular}
    \vspace{-4mm}
\caption{PPI network: query vertices in blue.\label{fig:bio}}
\vspace{1mm}
 \end{figure}

In the {\mis} above, the additional vertex added is \textsf{SMAD4} which is known for its role in cancer.  This structure uncovers that, in fact, each query vertex in that connected component has also been linked to cancer.  Meanwhile the disconnected \textsf{NOD2} is associated with Chrons disease, and \textsf{FAM110B} has no strong association.
On the other hand, both {\mdl} and {\bh} add more vertices than necessary, with {\mdl} hiding the association completely.

\spara{Political Stance discovery:}  As a third example, we consider the setting of a social scientist analyzing human interaction with respect to a particular topic.  With a set of users in mind, the scientist seeks insights on how the users interact with each other, and whether strong relationships exist.
The graph is a set of Twitter users that interacted with a particular topic, in our case, the 2016 US election.\footnote{http://www.vertexxlgraphgallery.org/Pages/Graph.aspx?graphID=83188}  

 \begin{figure}[H]
 \vspace{1mm}
   \centering
   \begin{tabular}{ccc}
       \mis & \bh & \mdl \\
 \hspace{-5mm}\includegraphics[scale=.29]{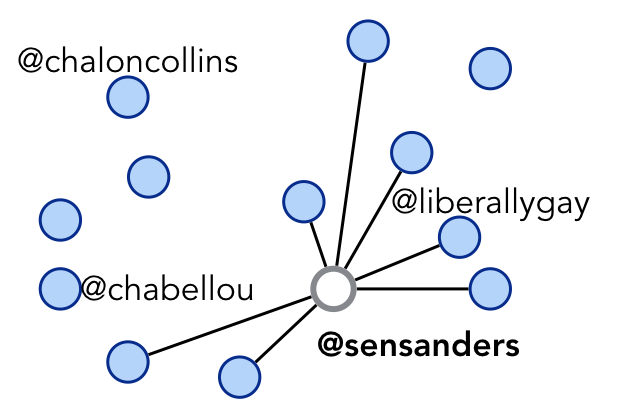} &\
 \hspace{-3mm}\includegraphics[scale=.29]{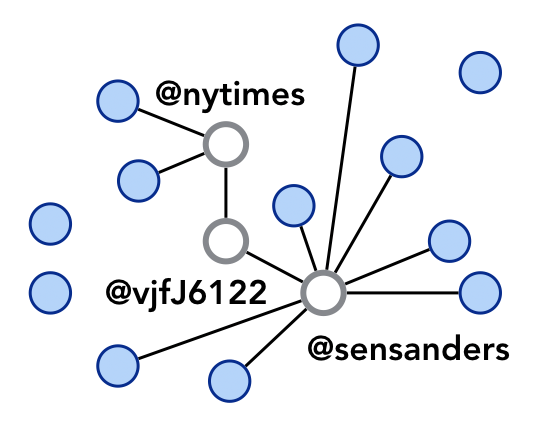} &
\hspace{-3mm} \includegraphics[scale=.29]{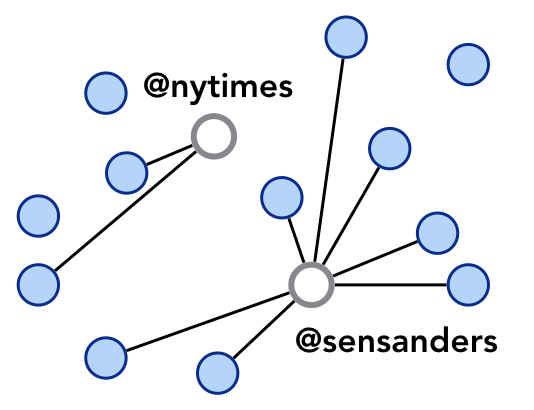}\\
    \end{tabular}
    \vspace{-4mm}
 \caption{Social network: query vertices in blue.\label{fig:drumpf}}
\vspace{1mm}
 \end{figure}

 Above, we see the {\mis} for a random query set.  For each vertex, we check the contextual information available with the graph as well as the Twitter profiles.  We find that the connected component incorporates vertices that express support for Senator Bernie Sanders who is the central added vertex.  The disconnected vertices express other political views, for example, both \texttt{@chaloncollins} and \texttt{@chabellou} express strong support for Donald Trump.  In comparison, both {\mdl} and {\bh} incorporate weaker connections between vertices that hide the natural division of vertices.

\spara{Topic exploration:}  In this example, we consider the educational setting where one is curious about the relationship among a set of topics. For example, which topics are similar?  Or, which topics branched off of others?  Specifically, we consider a graph of $2$k philosophers  with edges are added according to the \emph{influenced-by} section of wikipedia.\footnote{http://www.coppelia.io/2012/06/graphing-the-history-of-philosophy/}

 \begin{figure}[H]
 \vspace{-2mm}
   \centering
   \begin{tabular}{ccc}
       \mis & \bh & \mdl \\
 \hspace{-5mm}\includegraphics[scale=.29]{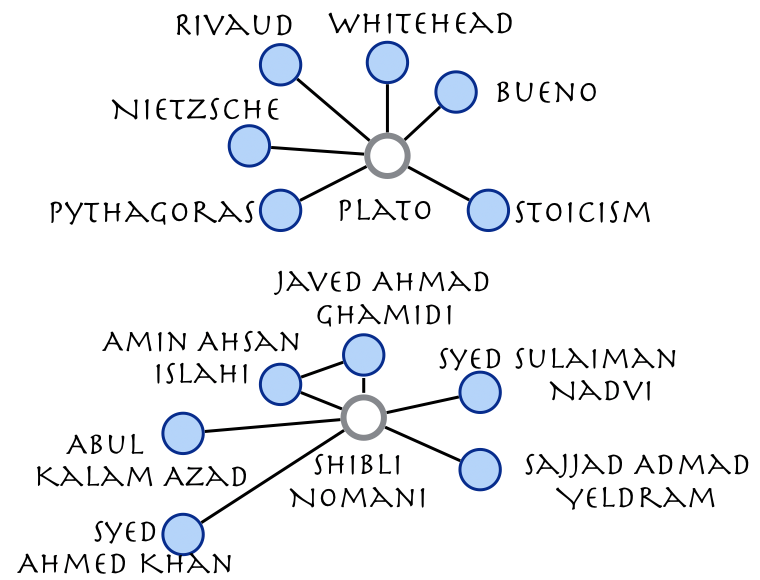} &\
 \hspace{-3mm}\includegraphics[scale=.29]{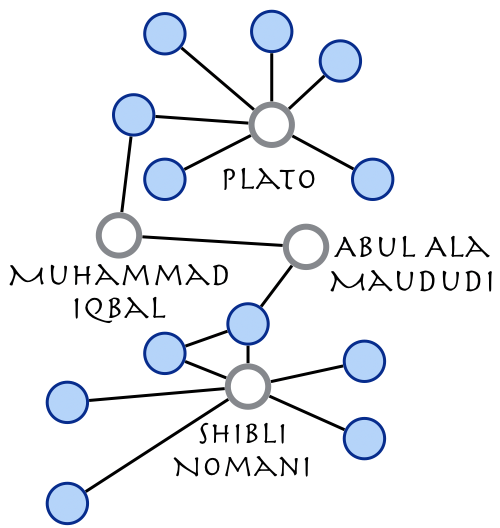} &
\hspace{-3mm} \includegraphics[scale=.29]{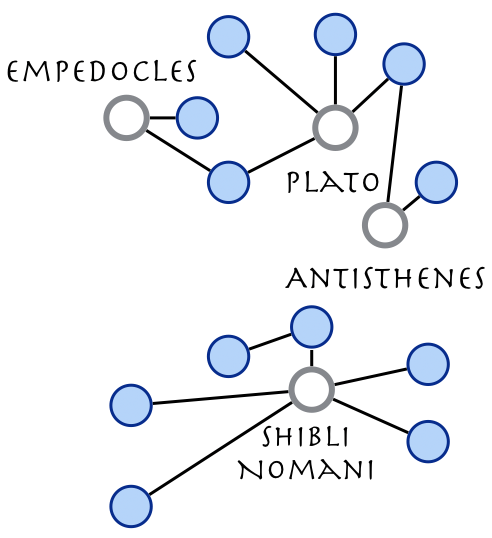}\\
    \end{tabular}
    \vspace{-4mm}
 \caption{Philosophers network: query vertices in blue.\label{fig:phil}}
\vspace{-2mm}
 \end{figure}

We selected a random sample of philosophers, and the resulting {\mis} is shown above.  We see very clearly two connected components, one corresponding to Western philosophers and the other to Islamic philosophers -- with just a glance we understand the ideological relationships among query vertices.  Further, each of the added vertices is a key historical figure in the respective tradition,  summarizing the connections by well-known (central) philosophers.
From the comparison, we see that  {\bh} and {\mdl} are less succinct and less informative about the strong and separate ideologies.

\section{Conclusions}
\label{sec:conclusions}

In this paper, we study the general class of problems related to finding a selective connector of a graph:  given a graph and a set of query vertices, find a subgraph that contains all query vertices and optimizes a certain measure of cohesiveness, while also not necessarily requiring the output subgraph to be connected.

In this regard, we define a new graph-theoretic measure, dubbed \emph{network inefficiency}, that allows for simultaneously accounting for both requirements of high-cohesiveness and non-mandatory connectedness.
The specific selective-connector problem instance we tackle in this work is what we call the \emph{minimum inefficiency subgraph} problem, which requires finding a subgraph containing the input query vertices and having minimum network inefficiency.
We show that the problem is \NPhard, and devise a greedy heuristic that provides effective solutions.
We empirically assess the performance of the proposed algorithms in a variety of synthetic and real-world graphs, as well as by several case studies from different domains, such as human brain, cancer, and food networks.

In the future, we plan to extend the portfolio of algorithms for the minimum-inefficiency-subgraph problem by focusing on approximation algorithms with provable quality guarantees and/or heuristics based on paradigms other than greedy.
We also plan to study incremental versions of the problem, where solutions are not recomputed from scratch if a change occurs in the query vertex set and/or the input graph.

\bibliographystyle{abbrv}
\bibliography{references}

\end{document}